\documentclass[prx,superscriptaddress,aps, twocolumn, 10pt,longbibliography, floatfix]{revtex4-2}

\usepackage{tikzit}

\tikzstyle{colored_text_box}=[
    shape=rectangle,
    rounded corners=3pt,
    fill={rgb,255: red,255; green,255; blue,255},  
    fill opacity=0.6,
    text opacity=1,
    draw=none,
    inner sep=2pt,
    font={\fontsize{10pt}{12pt}\selectfont},
    text=orange     
]

\tikzstyle{pink_text_box}=[
    shape=rectangle,
    rounded corners=3pt,
    fill={rgb,255: red,255; green,255; blue,255},  
    fill opacity=0.6,
    text opacity=1,
    draw=none,
    inner sep=2pt,
    font={\fontsize{10pt}{12pt}\selectfont},
    text={rgb,255: red,255; green,20; blue,147}     
]
\tikzstyle{Label}=[font={\fontsize{10pt}{12pt}\selectfont}, tikzit category=circuit, align=center]

\tikzstyle{gateTest}=[shape=rectangle, fill=white, draw=black, minimum height=10mm, minimum width=7.5mm, font={\fontsize{10pt}{12pt}\selectfont}, tikzit category=circuit, inner sep = 0]

\tikzstyle{gateTest2}=[shape=rectangle, fill=white, draw=black, minimum height=17.5mm, minimum width=7.5mm, font={\fontsize{10pt}{12pt}\selectfont}, tikzit category=circuit, inner sep = 0]

\tikzstyle{gate}=[shape=rectangle, text height=1.5ex, text depth=0.25ex, yshift=0.5mm, fill=white, draw=black, minimum height=5mm, yshift=-0.5mm, minimum width=5mm, font={\small}, tikzit category=circuit]
\tikzstyle{gate_huge}=[shape=rectangle, text height=1.5ex, text depth=0.25ex, yshift=0.5mm, fill=white, draw=black, minimum height=10mm, yshift=-0.5mm, minimum width=50mm, font={\small}, tikzit category=circuit]
\tikzstyle{gate_wide}=[shape=rectangle, text height=1.5ex, text depth=0.25ex, yshift=0.5mm, fill=white, draw=black, minimum height=5mm, yshift=-0.5mm, minimum width=15mm, font={\small}, tikzit category=circuit]
\tikzstyle{gate_mwide}=[shape=rectangle, text height=1.5ex, text depth=0.25ex, yshift=0.5mm, fill=white, draw=black, minimum height=5mm, yshift=-0.5mm, minimum width=8mm, font={\small}, tikzit category=circuit]
\tikzstyle{gate_vwide}=[shape=rectangle, text height=1.5ex, text depth=0.25ex, yshift=0.5mm, fill=white, draw=black, minimum height=5mm, yshift=-0.5mm, minimum width=20mm, font={\small}, tikzit category=circuit]
\tikzstyle{gate_high}=[shape=rectangle, text height=1.5ex, text depth=0.25ex, yshift=0.5mm, fill=white, draw=black, minimum height=15mm, yshift=-0.5mm, minimum width=5mm, font={\small}, tikzit category=circuit]
\tikzstyle{gate_vhigh}=[shape=rectangle, text height=1.5ex, text depth=0.25ex, yshift=0.5mm, fill=white, draw=black, minimum height=25mm, yshift=-0.5mm, minimum width=5mm, font={\small}, tikzit category=circuit]
\tikzstyle{big gate}=[shape=rectangle, text height=1.5ex, text depth=0.25ex, yshift=0.5mm, fill=white, draw=black, minimum height=10mm, yshift=-0.5mm, minimum width=5mm, font={\small}, tikzit category=circuit]
\tikzstyle{Z dot}=[inner sep=0mm, minimum size=2mm, shape=circle, draw=black, fill={rgb,255: red,221; green,255; blue,221}, tikzit category=zx]
\tikzstyle{Z phase dot}=[minimum size=5mm, font={\footnotesize\boldmath}, shape=rectangle, rounded corners=2mm, inner sep=0.2mm, outer sep=-2mm, scale=0.8, tikzit shape=circle, draw=black, fill={rgb,255: red,221; green,255; blue,221}, tikzit draw=blue, tikzit category=zx]
\tikzstyle{X dot}=[Z dot, shape=circle, draw=black, fill={rgb,255: red,255; green,136; blue,136}, tikzit category=zx]
\tikzstyle{X phase dot}=[Z phase dot, tikzit shape=circle, tikzit draw=blue, fill={rgb,255: red,255; green,136; blue,136}, font={\footnotesize\boldmath}, tikzit category=zx]
\tikzstyle{hadamard}=[fill=yellow, draw=black, shape=rectangle, inner sep=0.6mm, minimum height=1.5mm, minimum width=1.5mm, tikzit category=zx]
\tikzstyle{paulibox}=[fill={rgb,255: red,221; green,221; blue,255}, draw=black, shape=rectangle, inner sep=0.6mm, minimum height=5mm, minimum width=5mm, font={\footnotesize}, text height=1.5ex, text depth=0.25ex, tikzit category=zx]
\tikzstyle{vertex}=[inner sep=0mm, minimum size=1mm, shape=circle, draw=black, fill=black, tikzit category=misc]
\tikzstyle{vertex set}=[inner sep=0mm, minimum size=1mm, shape=circle, draw=black, fill=white, font={\footnotesize\boldmath}, tikzit category=misc]
\tikzstyle{small black dot}=[fill=black, draw=black, shape=circle, inner sep=0pt, minimum width=1.2mm, tikzit category=circuit]
\tikzstyle{cnot ctrl}=[fill=black, draw=black, shape=circle, inner sep=0pt, minimum width=1.2mm, tikzit category=circuit]
\tikzstyle{cnot targ}=[fill=white, draw=white, shape=circle, tikzit category=circuit, label={center:$\oplus$}, inner sep=0pt, minimum width=2.1mm, tikzit fill={rgb,255: red,102; green,204; blue,255}, tikzit draw=black]
\tikzstyle{ket}=[fill=white, draw=black, shape=regular polygon, regular polygon sides=3, regular polygon rotate=-30, scale=0.7, inner sep=1pt, tikzit category=circuit, tikzit shape=rectangle, tikzit fill=green]
\tikzstyle{bra}=[fill=white, draw=black, shape=regular polygon, regular polygon sides=3, regular polygon rotate=30, scale=0.7, inner sep=1pt, tikzit category=circuit, tikzit shape=rectangle, tikzit fill=red]
\tikzstyle{scalar}=[shape=rectangle, text height=1.5ex, text depth=0.25ex, yshift=0.5mm, fill=white, draw=black, minimum height=5mm, yshift=-0.5mm, minimum width=5mm, font={\small}]
\tikzstyle{clabel}=[fill=white, draw=none, shape=rectangle, tikzit fill={rgb,255: red,56; green,255; blue,242}, font={\footnotesize}, inner sep=1pt, tikzit category=labels]
\tikzstyle{empty diagram}=[draw={gray!40!white}, dashed, shape=rectangle, minimum width=1cm, minimum height=1cm, tikzit category=misc]
\tikzstyle{amap}=[fill=white, draw=black, shape=NEbox, tikzit category=asymmetric, tikzit fill=yellow, tikzit shape=rectangle]
\tikzstyle{amap conj}=[fill=white, draw=black, shape=NWbox, tikzit category=asymmetric, tikzit fill=green, tikzit shape=rectangle]
\tikzstyle{amap adj}=[fill=white, draw=black, shape=SEbox, tikzit category=asymmetric, tikzit fill=red, tikzit shape=rectangle]
\tikzstyle{amap trans}=[fill=white, draw=black, shape=SWbox, tikzit category=asymmetric, tikzit fill=orange, tikzit shape=rectangle]
\tikzstyle{astate}=[fill=white, draw=black, shape=NEtriangle, tikzit category=asymmetric, tikzit shape=circle, tikzit fill=yellow]
\tikzstyle{astate conj}=[fill=white, draw=black, shape=NWtriangle, tikzit category=asymmetric, tikzit shape=circle, tikzit fill=green]
\tikzstyle{astate adj}=[fill=white, draw=black, shape=SEtriangle, tikzit category=asymmetric, tikzit shape=circle, tikzit fill=red]
\tikzstyle{astate trans}=[fill=white, draw=black, shape=SWtriangle, tikzit category=asymmetric, tikzit shape=circle, tikzit fill=orange]
\tikzstyle{tri}=[fill={rgb,255: red,255; green,136; blue,136}, tikzit fill=red, draw=black, shape=regular polygon, regular polygon sides=3, regular polygon rotate=30, scale=0.7, inner sep=0.2pt, minimum size=7mm, tikzit shape=rectangle]
\tikzstyle{tri left}=[fill={rgb,255: red,255; green,136; blue,136}, tikzit fill=red, draw=black, shape=regular polygon, regular polygon sides=3, regular polygon rotate=210, scale=0.7, inner sep=0.2pt, minimum size=7mm, tikzit shape=rectangle]
\tikzstyle{new style 0}=[Z phase dot, fill={rgb,255: red,174; green,189; blue,255}, draw=black, shape=circle]
\tikzstyle{new style 1}=[fill=white, draw=black, shape=rectangle]
\tikzstyle{medium box}=[fill=white, draw=black, shape=rectangle, minimum width=0.75cm, minimum height=1.5cm]
\tikzstyle{cnot ctrl}=[fill=black, draw=black, shape=circle, inner sep=0pt, minimum width=1.2mm, tikzit category=circuit]
\tikzstyle{cnot targ}=[fill=white, draw=white, shape=circle, tikzit category=circuit, label={center:$\oplus$}, inner sep=0pt, minimum width=2.1mm, tikzit fill={rgb,255: red,102; green,204; blue,255}, tikzit draw=black]
\tikzstyle{my_ket}=[fill=white, draw=black, shape=regular polygon, regular polygon sides=3, regular polygon rotate=0, scale=0.7, inner sep=1pt, tikzit category=circuit, tikzit shape=rectangle, tikzit fill=green]
\tikzstyle{my_bra}=[fill=white, draw=black, shape=regular polygon, regular polygon sides=3, regular polygon rotate=180, scale=0.7, inner sep=1pt, tikzit category=circuit, tikzit shape=rectangle, tikzit fill=red]

\tikzstyle{smallTensor}=[fill={rgb,255: red,255; green,0; blue,255}, draw={rgb,150: red,255; green,0; blue,255}, shape=circle, minimum width=5mm, font={\footnotesize}, text height=1.5ex, text depth=0.25ex, tikzit category=zx]
\tikzstyle{bigTensor}=[fill={rgb,255: red,51; green,204; blue,204}, draw={rgb,150: red,51; green,204; blue,204}, shape=rectangle, inner sep=0.6mm, minimum height=10mm, minimum width=35mm, font={\footnotesize}, text height=1.5ex, text depth=0.25ex, tikzit category=zx]

\tikzstyle{smallMPO}=[fill={rgb,255: red,255; green,153; blue,51}, draw={rgb,150: red,255; green,153; blue,51}, shape=rectangle, inner sep=0.6mm, minimum height=10mm, minimum width=5mm, font={\footnotesize}, text height=1.5ex, text depth=0.25ex, tikzit category=zx]
\tikzstyle{bigMPO}=[fill={rgb,255: red,51; green,204; blue,51}, draw={rgb,150: red,51; green,204; blue,51}, shape=rectangle, inner sep=0.6mm, minimum height=10mm, minimum width=35mm, font={\footnotesize}, text height=1.5ex, text depth=0.25ex, tikzit category=zx]

\tikzstyle{rounded_green_box}=[
    shape=rectangle,
    rounded corners=2mm,
    fill={rgb,255: red,87; green,226; blue,67},
    fill opacity=0.2,
    text opacity = 1,
    minimum height=15mm,
    minimum width=60mm,
    tikzit category=circuit,
    inner xsep=2mm,
    align = left,
    text width = 56mm,
    font={\fontsize{10pt}{12pt}\selectfont}
]

\tikzstyle{rounded_green_box_shallow}=[
    shape=rectangle,
    rounded corners=2mm,
    fill={rgb,255: red,87; green,226; blue,67},
    fill opacity=0.2,
    text opacity = 1,
    minimum height=25mm,
    minimum width=40mm,
    tikzit category=circuit,
    inner xsep=2mm,
    align = left,
    text width = 36mm,
    font={\fontsize{10pt}{12pt}\selectfont}
]

\tikzstyle{tensor_0_orange}=[
    shape=rectangle,
    fill = {rgb,255: red,226; green,135; blue,67}, 
    text opacity = 1,
    minimum height=10mm,
    minimum width=5mm,
    inner xsep=1mm,
    tikzit category=circuit,
    font={\fontsize{10pt}{12pt}\selectfont}
]
\tikzstyle{tensor_5_orange}=[
    shape=rectangle,
    fill = {rgb,255: red,226; green,135; blue,67}, 
    text opacity = 1,
    minimum height=10mm,
    minimum width=35mm,
    tikzit category=circuit,
    font={\fontsize{10pt}{12pt}\selectfont}
]

\tikzstyle{tensor_0}=[
    shape=rectangle,
    rounded corners=2mm,
    fill = {rgb,255: red,51; green,204; blue,204}, 
    draw={rgb,150: red,51; green,204; blue,204},
    text opacity = 1,
    minimum height=10mm,
    minimum width=5mm,
    inner xsep=1mm,
    tikzit category=circuit,
    font={\fontsize{10pt}{12pt}\selectfont}
]

\tikzstyle{tensor_tight}=[
    shape=rectangle,
    rounded corners=2mm,
    fill = {rgb,255: red,51; green,204; blue,204}, 
    draw={rgb,150: red,51; green,204; blue,204},
    text opacity = 1,
    minimum height=10mm,
    minimum width=5mm,
    inner xsep=0mm,
    tikzit category=circuit,
    font={\fontsize{7pt}{12pt}\selectfont}
]

\tikzstyle{tensor_5}=[
    shape=rectangle,
    rounded corners=3mm,
    fill = {rgb,255: red,51; green,204; blue,204}, 
    draw={rgb,150: red,51; green,204; blue,204},
    text opacity = 1,
    minimum height=10mm,
    minimum width=35mm,
    tikzit category=circuit,
    font={\fontsize{10pt}{12pt}\selectfont}
]

\tikzstyle{tensor_4}=[
    shape=rectangle,
    rounded corners=3mm,
    fill = {rgb,255: red,51; green,204; blue,204}, 
    draw={rgb,150: red,51; green,204; blue,204},
    text opacity = 1,
    minimum height=10mm,
    minimum width=27.5mm,
    tikzit category=circuit,
    font={\fontsize{10pt}{12pt}\selectfont}
]

\tikzstyle{rounded_gray_box}=[
    shape=rectangle,
    rounded corners=3mm,
    fill={rgb,255: red,203; green,192; blue,225},
    fill opacity=0.2,
    text opacity = 1,
    align = left,
    text width = 36mm,
    inner xsep = 2mm,
    minimum height=15mm,
    minimum width=40mm,
    tikzit category=circuit,
    inner sep=2mm,
    font={\fontsize{10pt}{12pt}\selectfont}
]

\tikzstyle{gateTest_RU}=[shape=rectangle, fill=white, draw=black, minimum height=27.5mm, minimum width=7.5mm, font={\fontsize{10pt}{12pt}\selectfont}, tikzit category=circuit, inner sep = 5, align=center] 
\tikzstyle{gateTest_gray}=[shape=rectangle, dashed, fill=white, draw={rgb,255: red,203; green,192; blue,225}, minimum height=27.5mm, minimum width=7.5mm, font={\fontsize{10pt}{12pt}\selectfont}, tikzit category=circuit, inner sep = 2, align=center] 

\tikzstyle{gateTest_0}=[shape=rectangle, fill=white, draw=black, minimum height=5mm, minimum width=7.5mm, font={\fontsize{10pt}{12pt}\selectfont}, tikzit category=circuit, inner sep = 0] 
\tikzstyle{gateTest_1}=[shape=rectangle, align = center, fill=white, draw=black, minimum height=10.0mm, minimum width=7.5mm, font={\fontsize{10pt}{12pt}\selectfont}, tikzit category=circuit, inner sep = 0] 
\tikzstyle{gateTest_2}=[shape=rectangle, align = center, fill=white, draw=black, minimum height=17.5mm, minimum width=7.5mm, font={\fontsize{10pt}{12pt}\selectfont}, tikzit category=circuit, inner sep = 0] 
\tikzstyle{gateTest_3}=[shape=rectangle, align = center, fill=white, draw=black, minimum height=25.0mm, minimum width=7.5mm, font={\fontsize{10pt}{12pt}\selectfont}, tikzit category=circuit, inner sep = 0] 
\tikzstyle{gateTest_4}=[shape=rectangle, align = center, fill=white, draw=black, minimum height=32.5mm, minimum width=7.5mm, font={\fontsize{10pt}{12pt}\selectfont}, tikzit category=circuit, inner sep = 0] 
\tikzstyle{gateTest_5}=[shape=rectangle, align = center, fill=white, draw=black, minimum height=40.0mm, minimum width=7.5mm, font={\fontsize{10pt}{12pt}\selectfont}, tikzit category=circuit, inner sep = 0] 
\tikzstyle{gateTest_6}=[shape=rectangle, align = center, fill=white, draw=black, minimum height=47.5mm, minimum width=7.5mm, font={\fontsize{10pt}{12pt}\selectfont}, tikzit category=circuit, inner sep = 0] 
\tikzstyle{gateTest_7}=[shape=rectangle, fill=white, draw=black, minimum height=55.0mm, minimum width=7.5mm, font={\fontsize{10pt}{12pt}\selectfont}, tikzit category=circuit, inner sep = 0] 
\tikzstyle{gateTest_8}=[shape=rectangle, fill=white, draw=black, minimum height=62.5mm, minimum width=7.5mm, font={\fontsize{10pt}{12pt}\selectfont}, tikzit category=circuit, inner sep = 0] 
\tikzstyle{gateTest_9}=[shape=rectangle, fill=white, draw=black, minimum height=70.0mm, minimum width=7.5mm, font={\fontsize{10pt}{12pt}\selectfont}, tikzit category=circuit, inner sep = 0] 
\tikzstyle{gateTest_10}=[shape=rectangle, fill=white, draw=black, minimum height=77.5mm, minimum width=7.5mm, font={\fontsize{10pt}{12pt}\selectfont}, tikzit category=circuit, inner sep = 0] 
\tikzstyle{gateTest_11}=[shape=rectangle, fill=white, draw=black, minimum height=85.0mm, minimum width=7.5mm, font={\fontsize{10pt}{12pt}\selectfont}, tikzit category=circuit, inner sep = 0] 
\tikzstyle{gateTest_12}=[shape=rectangle, fill=white, draw=black, minimum height=92.5mm, minimum width=7.5mm, font={\fontsize{10pt}{12pt}\selectfont}, tikzit category=circuit, inner sep = 0] 
\tikzstyle{gateTest_13}=[shape=rectangle, fill=white, draw=black, minimum height=100.0mm, minimum width=7.5mm, font={\fontsize{10pt}{12pt}\selectfont}, tikzit category=circuit, inner sep = 0] 
\tikzstyle{gateTest_14}=[shape=rectangle, fill=white, draw=black, minimum height=107.5mm, minimum width=7.5mm, font={\fontsize{10pt}{12pt}\selectfont}, tikzit category=circuit, inner sep = 0] 
\tikzstyle{gateTest_15}=[shape=rectangle, fill=white, draw=black, minimum height=115.0mm, minimum width=7.5mm, font={\fontsize{10pt}{12pt}\selectfont}, tikzit category=circuit, inner sep = 0] 
\tikzstyle{gateTest_16}=[shape=rectangle, fill=white, draw=black, minimum height=122.5mm, minimum width=7.5mm, font={\fontsize{10pt}{12pt}\selectfont}, tikzit category=circuit, inner sep = 0] 
\tikzstyle{gateTest_17}=[shape=rectangle, fill=white, draw=black, minimum height=130.0mm, minimum width=7.5mm, font={\fontsize{10pt}{12pt}\selectfont}, tikzit category=circuit, inner sep = 0] 
\tikzstyle{gateTest_18}=[shape=rectangle, fill=white, draw=black, minimum height=137.5mm, minimum width=7.5mm, font={\fontsize{10pt}{12pt}\selectfont}, tikzit category=circuit, inner sep = 0] 
\tikzstyle{gateTest_19}=[shape=rectangle, fill=white, draw=black, minimum height=145.0mm, minimum width=7.5mm, font={\fontsize{10pt}{12pt}\selectfont}, tikzit category=circuit, inner sep = 0]

\tikzstyle{measurement}=[
    shape=rectangle, 
    fill=white, 
    draw=black, 
    minimum height=5mm, 
    minimum width=7.5mm, 
    tikzit category=circuit,
    path picture={
        \draw[line width = 0.6pt] (path picture bounding box.center) ++(-2mm,-1mm) arc (180:0:2mm);
        \draw[->,line width = 0.6pt] (path picture bounding box.center) ++(0,-0.5mm) -- ++(2mm,2mm);
    }
]

\tikzstyle{pink}=[-, dashed, dash pattern=on 2pt off 0.5pt, thick, draw={rgb,255: red,255; green,20; blue,147}]
\tikzstyle{green}=[-, dashed, dash pattern=on 2pt off 0.5pt, thick, draw={rgb,255: red,87; green,226; blue,67}]
\tikzstyle{orange}=[-, dashed, dash pattern=on 2pt off 0.5pt, thick, draw={rgb,255: red,226; green,135; blue,67}]
\tikzstyle{hadamard edge}=[-, dashed, dash pattern=on 2pt off 0.5pt, thick, draw={rgb,255: red,68; green,136; blue,255}]
\tikzstyle{box edge}=[-, dashed, dash pattern=on 2pt off 0.5pt, thick, draw={rgb,255: red,203; green,192; blue,225}]
\tikzstyle{brace edge}=[-, tikzit draw=blue, decorate, decoration={brace,amplitude=1mm,raise=-1mm}]
\tikzstyle{physicalEdge}=[-, tikzit draw={rgb,255: red,255; green,0; blue,255}, draw={rgb,150: red,255; green,0; blue,255}]
\tikzstyle{physicalEdgeThick}=[-, tikzit draw={rgb,255: red,255; green,0; blue,255}, draw={rgb,150: red,255; green,0; blue,255}, line width=5pt]

\tikzstyle{virtualEdgeOrange}=[-, tikzit draw={rgb,255: red,226; green,135; blue,67}, draw={rgb,255: red,226; green,135; blue,67}, line width=1pt]
\tikzstyle{virtualEdge}=[-, tikzit draw={rgb,150: red,51; green,204; blue,204}, draw={rgb,255: red,51; green,204; blue,204}, line width=1pt]
\tikzstyle{virtualEdge1}=[-, tikzit draw={rgb,150: red,51; green,204; blue,204}, draw={rgb,255: red,51; green,204; blue,204}, line width=1pt]
\tikzstyle{virtualEdge3}=[-, tikzit draw={rgb,150: red,51; green,204; blue,204}, draw={rgb,255: red,51; green,204; blue,204}, line width=3pt]
\tikzstyle{virtualEdge5}=[-, tikzit draw={rgb,150: red,51; green,204; blue,204}, draw={rgb,255: red,51; green,204; blue,204}, line width=5pt]
\tikzstyle{diredge}=[->]
\tikzstyle{double edge}=[-, double, shorten <=-1mm, shorten >=-1mm, double distance=2pt]
\tikzstyle{gray edge}=[-, {gray!60!white}]
\tikzstyle{pointer edge}=[->, very thick, gray]
\tikzstyle{boldedge}=[-, line width=1.6pt, shorten <=-0.17mm, shorten >=-0.17mm]
\tikzstyle{bidir edge}=[<->, very thick, draw={rgb,255: red,191; green,191; blue,191}]

\input{tikzit/quantum.tikzdefs}

\usepackage{amsmath,amssymb,amsthm,mathrsfs,amsfonts,dsfont,mathtools,physics}
\usepackage{braket}

\usepackage{ragged2e}
\usepackage{graphicx}
\usepackage{hyperref}
\usepackage[capitalize]{cleveref}

\usepackage{float}

\usepackage{microtype}

\usepackage[caption=false]{subfig}
\usepackage{soul}
\usepackage{adjustbox}

\usepackage{bm}
\usepackage{enumerate}
\usepackage{color}

\usepackage{appendix}

\usepackage{enumitem}
\usepackage[normalem]{ulem}
\usepackage{comment}
\usepackage{xcolor}

\usepackage{algorithm}       
\usepackage{algpseudocode}

\usepackage{amsthm}

\newtheorem{theorem}{Theorem}
\newtheorem{lemma}{Lemma}

\newtheorem{corollary}{Corollary}

\Crefname{theorem}{Theorem}{Theorems}
\theoremstyle{remark}

\newcommand{\qmaddress}{\affiliation{Quantum Motion, 9 Sterling Way, London N7 9HJ, United Kingdom}}
\newcommand{\oxaddress}{\affiliation{Mathematical Institute, University of Oxford, Woodstock Road, Oxford OX2 6GG, United Kingdom}}
\newcommand{\mrnaaddress}{\affiliation{Moderna, Cambridge, MA 02139, USA}}

\begin{document}

\title{A Shadow Enhanced Greedy Quantum Eigensolver}

\author{Jona Erle}
\email{jona.erle@balliol.ox.ac.uk}
\oxaddress
\qmaddress
\mrnaaddress

\author{B\'alint Koczor}
\email{balint.koczor@maths.ox.ac.uk}
\oxaddress
\qmaddress

\begin{abstract}
While ground-state preparation is expected to be a primary application of quantum computers, it is also an essential subroutine for many fault-tolerant algorithms. 
In early fault-tolerant regimes, logical measurements remain costly, motivating adaptive, shot-frugal state-preparation strategies that efficiently utilize each measurement.
We introduce the Shadow Enhanced Greedy Quantum Eigensolver (SEGQE) as a greedy, shadow-assisted framework for measurement-efficient ground-state preparation. SEGQE uses classical shadows to evaluate---in parallel and entirely in classical post-processing---the energy reduction induced by large collections of local candidate gates, greedily selecting at each step the gate with the largest estimated energy decrease. 
We derive rigorous worst-case per-iteration sample-complexity bounds for SEGQE, exhibiting logarithmic dependence on the number of candidate gates. Numerical benchmarks on finite transverse-field Ising models and ensembles of random local Hamiltonians demonstrate convergence in a number of iterations that scales approximately linearly with system size, while maintaining high-fidelity ground-state approximations and competitive energy estimates.
Together, our empirical scaling laws and rigorous per-iteration guarantees establish SEGQE as a measurement-efficient state-preparation primitive well suited to early fault-tolerant quantum computing architectures.
\end{abstract}

\maketitle
\section{Introduction}
Preparing high-fidelity ground states is a central task in quantum computing, with applications in quantum chemistry \cite{kandala2017hardware}, quantum optimization \cite{farhi2014quantum, moll2018quantum}, and quantum machine learning \cite{farhi2018classification, cong2019quantum, peruzzo2014variational, cerezo2021variational, koczor2022quantum, koczor2022quantum2}. Furthermore, the efficiency of many fault-tolerant algorithms (e.g. phase estimation \cite{wang2022state}) hinges on the quality of their input state.

The Variational Quantum Eigensolver (VQE), a prominent example of a variational quantum algorithm (VQA), is a well-studied, hybrid, quantum-classical algorithm proposed to prepare the ground state of a Hamiltonian $H$ by variationally optimizing the parameters of an ansatz circuit $U(\theta)$ \cite{cerezo2021variational, peruzzo2014variational}. Since the ansatz structure is typically fixed and only the parameters $\theta$ are varied during training, choosing the right ansatz structure is crucial as it determines the family of expressible states \cite{kandala2017hardware, romero2018strategies, wecker2015progress}. It has been observed in the literature that fixed ansatz structures lead to a fundamental trade-off between expressivity and trainability \cite{holmes2022connecting}: highly expressive ansätze tend to exhibit barren plateaus of exponentially small gradients \cite{mcclean2018barren, larocca2024review, cerezo2021cost, wang2021noise}, whereas simpler ansätze may lack the ability to approximate the target state accurately. Trainability issues are further exacerbated by the presence of poor local minima \cite{anschuetz2022quantum, bittel2021training}. 
Consequently, generic applications of both gradient-based and gradient-free VQAs may demand a prohibitively large measurement budget, constituting a major bottleneck for scalability on realistic hardware. 

Resource minimization becomes even more critical in early fault-tolerant regimes, where deeper circuits may be feasible but logical measurement rates remain limited \cite{zimboras2025myths}. These considerations have motivated the development of alternative state-preparation strategies that retain flexibility, avoid expensive gradient estimation, and aim to extract maximal information from each quantum measurement \cite{boyd2022training, hwang2025preparing, puig2026warm,grimsley2019adaptive, nakanishi2020sequential}. 
Classical shadows \cite{huang2020predicting} are particularly relevant in this context, as they enable the simultaneous estimation of many expectation values from a modest number of measurements. Specifically, given a quantum state $\rho$, one performs a measurement in a randomly chosen basis (e.g., random local Pauli or Clifford measurements) and constructs from each outcome a classical snapshot (a shadow) $\hat{\rho}$. These snapshots can be used to estimate expectation values of a wide range of observables. Remarkably, for fixed accuracy and failure probability, the number of required snapshots scales only logarithmically in the number $M$ of target observables, while generally growing exponentially with their locality under random local Pauli measurements. 

In this work, we introduce a resource-efficient, greedy, adaptive ground-state preparation algorithm that we term \emph{Shadow Enhanced Greedy Quantum Eigensolver} (SEGQE). Given an input state $\ket{\psi_0}$ and a Hamiltonian $H$, SEGQE iteratively constructs a circuit starting from the identity. At each step, a batch of classical shadows of the current state is collected and used to estimate the energy decrease induced by a set of candidate gate additions. The gate yielding the largest predicted energy decrease is appended to the circuit. Repeating this procedure progressively refines the circuit and yields a state that approximates the ground state of $H$. 

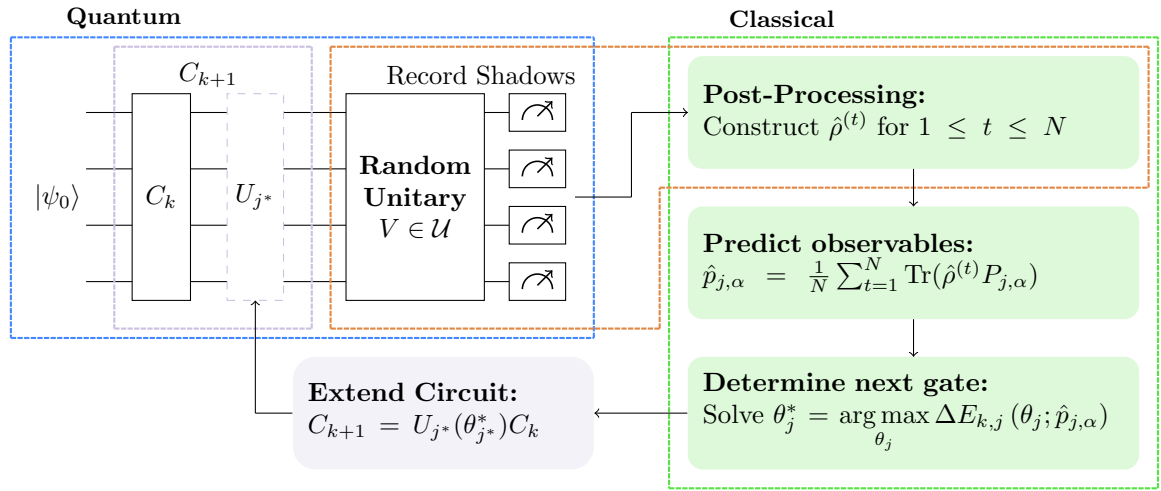
\begin{figure*}[htbp]
	\begin{center}
		\begin{tikzpicture}
	\begin{pgfonlayer}{nodelayer}
		\node [style=none] (0) at (-4.5, 2.5) {};
		\node [style=none] (1) at (11, 2.5) {};
		\node [style=none] (2) at (-2.5, 0.5) {};
		\node [style=none] (3) at (-2.5, -1) {};
		\node [style=none] (4) at (-2.5, -2.5) {};
		\node [style=none] (5) at (-2.5, -4) {};
		\node [style=none] (7) at (9.5, 0.5) {};
		\node [style=none] (8) at (9.5, -1) {};
		\node [style=none] (9) at (9.5, -2.5) {};
		\node [style=none] (10) at (9.5, -4) {};
		\node [style={gateTest_RU}] (11) at (6.25, -1.75) {\textbf{Random}\\[-0ex] \textbf{Unitary}\\[-0ex] $V\in\mathcal{U}$};
		\node [style=measurement] (12) at (9.5, 0.5) {};
		\node [style=measurement] (13) at (9.5, -1) {};
		\node [style=measurement] (14) at (9.5, -2.5) {};
		\node [style=measurement] (15) at (9.5, -4) {};
		\node [style={gateTest_RU}] (17) at (-0.5, -1.75) {$C_k$};
		\node [style=none] (18) at (-1.5, 3) {\textbf{Quantum}};
		\node [style=none] (19) at (13, 2.5) {};
		\node [style=none] (20) at (11, -5.5) {};
		\node [style=none] (21) at (13, -9.5) {};
		\node [style=none] (22) at (-4.5, -5.5) {};
		\node [style=none] (23) at (4, 2.25) {};
		\node [style=none] (24) at (4, -5.25) {};
		\node [style=none] (25) at (25.75, -1.5) {};
		\node [style=none] (26) at (25.75, 2.25) {};
		\node [style=none] (28) at (26, 2.5) {};
		\node [style=none] (29) at (26, -9.5) {};
		\node [style=none] (30) at (16, 3) {\textbf{Classical}};
		\node [style=none] (39) at (-1.75, 2.25) {};
		\node [style=none] (40) at (3.5, 2.25) {};
		\node [style=none] (41) at (-1.75, -5.25) {};
		\node [style=none] (42) at (3.5, -5.25) {};
		\node [style={gateTest_gray}] (44) at (2, -1.75) {$U_{j^*}$};
		\node [style=none] (46) at (12.75, -1.5) {};
		\node [style=none] (47) at (12.75, -5.25) {};
		\node [style=none] (48) at (10.5, -1.75) {};
		\node [style=none] (49) at (12, -1.75) {};
		\node [style=none] (50) at (12, 0.5) {};
		\node [style=none] (51) at (13.5, 0.5) {};
		\node [style={rounded_green_box}] (52) at (19.5, -7.5) {\textbf{Determine next gate:}\\[0ex] Solve $\theta^*_{j} = \underset{\theta_j}{\arg\max} \,\Delta E_{k,j}\left(\theta_j; \hat{p}_{j,\alpha}\right)$};
		\node [style={rounded_green_box}] (54) at (19.5, -3.5) {\textbf{Predict observables:}\\[-0ex] $\hat{p}_{j,\alpha} = \frac{1}{N}\sum_{t=1}^N\text{Tr}(\hat{\rho}^{(t)}P_{j,\alpha})$}; 
		\node [style={rounded_green_box}] (55) at (19.5, 0.5) {\textbf{Post-Processing:}\\[-0ex] Construct $\hat{\rho}^{(t)}$ for $1\leq t \leq N$};
		\node [style=none] (56) at (19.5, -1) {};
		\node [style=none] (57) at (19.5, -2) {};
		\node [style=none] (58) at (19.5, -5) {};
		\node [style=none] (59) at (19.5, -6) {};
		\node [style=none] (60) at (2, -7.5) {};
		\node [style=none] (61) at (2, -4.5) {};
		\node [style=Label] (67) at (-3.25, -1.75) {$\ket{\psi_0}$};
		\node [style=Label] (68) at (8, 1.5) {Record Shadows};
		\node [style=Label] (69) at (0.75, 1.5) {$C_{k+1}$};
		\node [style={rounded_gray_box}] (70) at (7, -7.5) {\textbf{Extend Circuit:}\\[-0ex] $C_{k+1} = U_{j^*}(\theta_{j^*}^*)C_k$};
	\end{pgfonlayer}
	\begin{pgfonlayer}{edgelayer}
		\draw (2.center) to (7.center);
		\draw (3.center) to (8.center);
		\draw (4.center) to (9.center);
		\draw (5.center) to (10.center);
		\draw [style=hadamard edge] (0.center) to (1.center);
		\draw [style=hadamard edge] (20.center) to (1.center);
		\draw [style=hadamard edge] (0.center) to (22.center);
		\draw [style=hadamard edge] (22.center) to (20.center);
		\draw [style=box edge] (39.center) to (40.center);
		\draw [style=box edge] (39.center) to (41.center);
		\draw [style=box edge] (41.center) to (42.center);
		\draw [style=box edge] (42.center) to (40.center);
		\draw [style=green] (19.center) to (21.center);
		\draw [style=green] (21.center) to (29.center);
		\draw [style=green] (29.center) to (28.center);
		\draw [style=green] (28.center) to (19.center);
		\draw [style=orange] (23.center) to (24.center);
		\draw [style=orange] (25.center) to (26.center);
		\draw [style=orange] (26.center) to (23.center);
		\draw [style=orange] (24.center) to (47.center);
		\draw [style=orange] (47.center) to (46.center);
		\draw [style=orange] (46.center) to (25.center);
		\draw (48.center) to (49.center);
		\draw (49.center) to (50.center);
		\draw [style=diredge] (50.center) to (51.center);
		\draw [style=diredge] (56.center) to (57.center);
		\draw [style=diredge] (58.center) to (59.center);
		\draw [style=diredge] (60.center) to (61.center);
		\draw (70) to (60.center);
		\draw [style=diredge] (52) to (70);
	\end{pgfonlayer}
\end{tikzpicture}
	\end{center}
	\caption{Schematic overview of the Shadow-Enhanced Greedy Quantum Eigensolver (SEGQE). At iteration $k$, a quantum computer prepares the state $\ket{\psi_k}=C_k\ket{\psi_0}$ and $N$ independent classical shadows are collected by applying random unitaries $V\in \mathcal{U}$ and measuring in the computational basis. In this work, $\mathcal{U}=\{I, H, S^\dagger H\}^{\otimes n}$, corresponding to uniformly random single-qubit Pauli measurements. The shadows are then used to estimate all Pauli expectation values needed to evaluate, for each candidate gate $U_j(\theta)\in\mathcal{G}$, the energy decrease $\Delta E_{k,j}(\theta)$. A classical optimizer finds $(j^*, \theta^*_{j^*})$ maximizing the expected decrease and appends $U_{j^*}(\theta^*_{j^*})$ to obtain $C_{k+1}$. The procedure repeats until convergence or until a predefined circuit depth is reached.}
	\label{fig:SEGQE}
\end{figure*}

The central advantage of SEGQE lies in its measurement efficiency. At each iteration, a single batch of classical shadows can be reused to evaluate a large set of candidate gate additions entirely in classical post-processing, yielding a per-iteration sample complexity that scales only logarithmically with the size of the candidate set. Furthermore, this design offloads the computationally expensive search over gate additions to classical hardware, enabling efficient parallelization and utilization of high-performance computing resources. While the circuit depth increases with successive gate additions, the per-iteration measurement cost grows only mildly with system size, and the information extracted per shot retains the asymptotic optimality guarantees of classical shadows. This feature is particularly well suited to early fault-tolerant regimes, where deeper circuits than typical VQA ansätze may be feasible, but relatively slow logical measurement rates demand shot-frugal techniques. In such settings, minimizing the measurement budget while exploiting classical computational power is crucial. SEGQE therefore provides a promising approach for applications such as initial state preparation for fault-tolerant phase estimation. 

A variety of iterative state-preparation methods have been proposed in the literature~\cite{motta2020determining, grimsley2019adaptive, feniou2025greedy} (see \Cref{appendix:related_works}). Here, we introduce a general classical-shadows-based framework that accommodates large sets of arbitrary local unitary gate additions and admits rigorous worst-case bounds on the per-iteration sample complexity with logarithmic dependence on the number of candidate gates. Although greedy strategies are not guaranteed to converge to the exact ground state in general, our numerical experiments demonstrate that SEGQE prepares high-fidelity ground-state approximations across a range of Hamiltonians.

The remainder of this article is structured as follows. In \cref{sec:procedure} we give a detailed description of the SEGQE algorithm, and in \cref{sec:guarantees} we provide rigorous worst-case upper bounds on its per-iteration sample complexity. In \cref{sec:numerics} we benchmark SEGQE on transverse-field Ising models and random local Hamiltonians, and study how the choice of gate set impacts convergence and final accuracy. Finally, in \cref{sec:conclusion} we reflect on our results and provide directions for future research.

\section{The Shadow Enhanced Greedy Quantum Eigensolver}
\label{sec:main}
\subsection{Procedure}
\label{sec:procedure}
Starting from an empty circuit $C_0=I$, the Shadow Enhanced Greedy Quantum Eigensolver (SEGQE) iteratively constructs an approximation to the ground state of a given $n$-qubit Hamiltonian $H=\sum^r_{i=1}c_iP_i$ by growing a quantum circuit $C$. We denote by $C_k$ the circuit after $k$ iterations. Applied to an initial state $\ket{\psi_0}$, which may encode prior classical information such as a Hartree--Fock state or a matrix product state, the circuit prepares $\ket{\psi_k}=C_k\ket{\psi_0}$. At each iteration, we record a collection of $N$ independent classical shadows obtained from uniformly random local Pauli measurements of $\ket{\psi_k}$, which allows us to estimate the energy difference
\begin{equation}
   \Delta E_{k, j}(\theta) =\bra{\psi_k}H\ket{\psi_k}-\bra{\psi_k}U^\dagger _j(\theta)HU_j(\theta)\ket{\psi_k},
\end{equation}
for every candidate gate $U_j(\theta)$ in a given set $\mathcal{G}=\{U_j(\theta)\}^K_{j=1}$. Classically maximizing the estimated energy difference over $\theta$ for all candidate gates enables us to identify the energy-minimizing gate $U_{j^*}(\theta^*_{j^*})$ and update the circuit as $C_k\to C_{k+1}=U_{j^*}(\theta^*_{j^*})C_k$. The procedure is repeated until either a predefined circuit depth $D$ is reached or no candidate gate yields an energy reduction exceeding a predefined threshold $\Delta >0$. 

To understand how SEGQE estimates the energy-minimizing parameters $\theta^*_j$ for each gate in $\mathcal{G}$ at iteration $k$ (for notational convenience, we drop the index $k$ in the following), consider a single arbitrarily parametrized $n$-qubit unitary $U(\theta) \in \mathrm{SU}(2^n)$ of locality $m\leq n$. 
Denote by $Q^u\subseteq\{1,\ldots n\}$ the qubit support of $U(\theta)$, and by $Q^c$ its complement, on which $U(\theta)$ acts trivially. Likewise, let $Q^i\subseteq\{1,\ldots n\}$ denote the support of the Hamiltonian term $P_i$. We can decompose each Pauli $P_i$ with respect to the bipartition $Q^u\cup Q^c$ as $P_i=P^u_i\otimes P^c_i$, where $P^u_i$ acts on $Q^u$ and $P^c_i$ acts on $Q^c$. Furthermore, we define the set $I^u=\left\{i\in\{1,\ldots,r\}: Q^i\cap Q^u\neq \emptyset\right\}$ as the indices of the Hamiltonian terms which act nontrivially on the qubits in $Q^u$. Hamiltonian terms with $Q^i \cap Q^u = \emptyset$ commute with $U(\theta)$ and therefore do not contribute to the energy difference. The energy difference induced by $U(\theta)$ is then given by
\begin{align}
    \Delta E(\theta)=&\bra{\psi}H\ket{\psi}-\bra{\psi}U^\dagger(\theta) H U(\theta) \ket{\psi}\\
	=& \sum_{i\in I^u}c_i 
	\left(\bra{\psi}P_i\ket{\psi}{-}\bra{\psi}U^\dagger(\theta) P^u_i U(\theta)\otimes P_i^c\ket{\psi} \right) \nonumber
\end{align}
    Since the $m$-qubit Pauli operators $\{P_\mu^u\}_{\mu=0}^{4^m-1}$ form an orthonormal basis (with respect to the Hilbert--Schmidt inner product) on the operator space corresponding to the qubits in $Q^u$, we can expand
    \begin{align}
        U^\dagger(\theta) P^u_i U(\theta) &= \sum_{\mu=0}^{4^m-1}\frac{1}{2^m}\mathrm{tr}\left[U^\dagger(\theta) P^u_iU(\theta)P_\mu^u\right]P_\mu^u\\&=\sum_{\mu=0}^{4^m-1}r_{i\mu}P_\mu^u,
    \end{align}
    where $r_{i\mu}(\theta):=\frac{1}{2^m}\mathrm{tr}\left[U^\dagger(\theta) P^u_iU(\theta)P_\mu^u\right]$. Substituting this into the above expression for the energy difference yields
    \begin{equation}
        \label{eq:energy_differences}
        \Delta E(\theta)=\sum_{i\in I^u}\sum^{4^m-1}_{\mu=0}f_{i\mu}(\theta)p_{i\mu},
    \end{equation}
    where we define
    \begin{equation*}
        f_{i\mu}(\theta):= c_i(\delta_{P^u_i,P_\mu^u} - r_{i\mu}(\theta)),\quad p_{i\mu} := \bra{\psi} P_\mu^u \otimes P_i^c \ket{\psi}.
    \end{equation*}
    Notably, the expectation values $p_{i\mu}$ are independent of $\theta$, so once estimated, $\Delta E(\theta)$ can be constructed via \cref{eq:energy_differences} and thus evaluated and maximized entirely classically. To simplify notation, we define the composite index $\alpha:=(i,\mu)$. For each candidate gate $U_j(\theta)\in\mathcal{G}$, we define the associated Pauli operators $P_{j,\alpha}:=P^{u(j)}_\mu \otimes P^{c(j)}_i$, and their expectation values $p_{j,\alpha}:=\bra{\psi}P_{j,\alpha}\ket{\psi}$, where the superscripts $u(j)$ and $c(j)$ indicate the bipartition induced by the support of $U_j$. The corresponding energy differences can then be written as
    \begin{equation}
        \Delta E_j(\theta) =\sum_{\alpha\in \mathcal{F}_j}f_{j,\alpha}(\theta)p_{j,\alpha}, 
    \end{equation}
    where $\mathcal{F}_j$ denotes the set of indices $\alpha$ that appear in the expansion associated with gate $U_j$. 
    In the following section, we derive rigorous guarantees on the estimation accuracy achievable when using classical shadows to predict these energy differences. The overall procedure of SEGQE is summarized in \cref{alg:segqe}.

\begin{figure}[htbp]
    \begin{minipage}{1\linewidth}
        \begin{algorithm}[H]
  \caption{Shadow Enhanced Greedy Quantum Eigensolver (SEGQE)}
  \label{alg:segqe}
  \begin{algorithmic}[1]
    \Require Hamiltonian $H = \sum_{i=1}^r c_i P_i$, initial state $\ket{\psi_0}$,
             gate set $\mathcal{G} = \{U_j(\theta)\}_{j=1}^K$, 
             maximum depth $D$, threshold $\Delta > 0$
    \State $C_0 \gets I$, $k \gets 0$
    \While{$k < D$}
      \State $\ket{\psi_k} \gets C_k \ket{\psi_0}$
      \State Record $N$ independent shadows $\{\hat{\rho}_k^{(t)}\}_{t=1}^N$ of $\ket{\psi_k}$
      \State Use shadows to estimate all required $p_{j,\alpha}$ 
      \State $\Delta E_{\mathrm{max}} \gets 0$, $(j^*, \theta^*) \gets \text{None}$
      \ForAll{candidate gates $U_j(\theta) \in \mathcal{G}$}
        \State Use $p_{j,\alpha}$ to classically find $\theta^*_j=\underset{\theta}{\arg\max} \Delta E_{k,j}(\theta)$
        \If{$\Delta E_{k,j}(\theta^*_j) > \Delta E_{\mathrm{max}}$}
          \State $\Delta E_{\mathrm{max}} \gets \Delta E_{k,j}(\theta^*_j)$
          \State $(j^*, \theta^*) \gets (j, \theta^*_j)$
        \EndIf
      \EndFor
      \If{$\Delta E_{\mathrm{max}} \le \Delta$}
        \State \textbf{break}  
      \EndIf
      \State $C_{k+1} \gets U_{j^*}(\theta^*) C_k$
      \State $k \gets k + 1$
    \EndWhile
    \State \Return $C_k$
  \end{algorithmic}
\end{algorithm}
    \end{minipage}

\end{figure}

\subsection{Per-Iteration Guarantees}
\label{sec:guarantees}
Throughout this section, we consider an $n$-qubit Hamiltonian $H=\sum_{i=1}^r c_iP_i$, where each Pauli operator $P_i$ has locality at most $l$, and we define $c_\mathrm{max}:=\max_i|c_i|$. Furthermore, for a set $S$ of $n$-qubit operators, we denote by $M(S)$ the maximal number of Hamiltonian terms whose support overlaps nontrivially with the support of any single operator in $S$. If the locality of the operators in $S$ is bounded, $M(S)$ is closely related to the sparsity of $H$. For a state $\ket{\psi}$ and a unitary $U$, we define the energy differences
\begin{equation}
    \Delta E_{\psi}(U):=\bra{\psi}H\ket{\psi}-\bra{\psi}U^\dagger HU\ket{\psi}.
\end{equation}
As described in \cref{sec:procedure}, at iteration $k$ for a given set of candidate gates $\mathcal{G}=\{U_j(\theta)\}^K_{j=1}$, SEGQE maximizes the energy differences $\Delta E_{\psi_k}\left(U_j(\theta)\right)$ for all gates $U_j(\theta)\in\mathcal{G}$ over $\theta$ and compares their values at their respective optimal parameter values. 
\begin{theorem}
    \label{theorem1}
    Let $\ket{\psi}$ be an arbitrary $n$-qubit state and $\mathcal{G}=\{U_j\}_{j=1}^K$ a finite collection of unitaries with locality at most $m\leq n$. Define $M := M(\mathcal{G})$. Then, for any $\epsilon, \delta \in (0,1]$, a collection of
    \begin{equation}
        N =\frac{32\log\left(2K/\delta\right)}{9\epsilon^2}4^m3^{l}M^2c_\mathrm{max}^2
    \end{equation}
    independent classical shadows of $\ket{\psi}$ obtained from uniformly random local Pauli measurements is sufficient to simultaneously estimate, up to additive error at most $\epsilon$ and failure probability at most $\delta$, the energy differences $\Delta E_{\psi}(U_j)$ for all $j=1,...,K$.
\end{theorem}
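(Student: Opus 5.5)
The plan is to build, for each candidate gate, an unbiased single-snapshot estimator of $\Delta E_\psi(U_j)$ that is almost surely bounded. Then Hoeffding's inequality gives the estimate for one gate, and a union bound over the $K$ gates gives the simultaneous estimate. The $\log(2K/\delta)$ factor comes from the union bound, and the factor $4^m3^lM^2c_\mathrm{max}^2$ comes from the square of the almost-sure range of the estimator.

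\textbf{Step 1: rewrite each energy difference as one observable.} For a fixed $U_j$, write $\Delta E_\psi(U_j)=\bra{\psi}O_j\ket{\psi}$, where
\begin{equation*}
O_j=\sum_{i\in I^{u(j)}}c_i\,A_{ij}\otimes P_i^{c(j)},\qquad A_{ij}:=P_i^{u(j)}-U_j^\dagger P_i^{u(j)}U_j .
\end{equation*}
This is the same expansion as \cref{eq:energy_differences}, regrouped. Three facts matter here. The sum has at most $M$ terms. Each $A_{ij}$ acts on at most $m$ qubits and satisfies $\|A_{ij}\|_\infty\le 2$. Each $P_i^{c(j)}$ has weight at most $l$.

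\textbf{Step 2: define the estimator.} For a snapshot $\hat\rho=\bigotimes_q(3\ket{s_q}\bra{s_q}-I)$ from random Pauli measurements, set $X_j:=\mathrm{tr}(\hat\rho\, O_j)$. Then $\mathbb{E}[X_j]=\Delta E_\psi(U_j)$, because $\hat\rho$ is an unbiased estimator of $\ket{\psi}\bra{\psi}$. Equivalently, $X_j=\sum_\alpha f_{j,\alpha}\,\mathrm{tr}(\hat\rho P_{j,\alpha})$, which is exactly the quantity computed in post-processing.

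\textbf{Step 3: bound $X_j$ almost surely.} Because $\hat\rho$ is a product state, each term factorizes as
\begin{equation*}
\mathrm{tr}(\hat\rho A_{ij}\otimes P_i^c)=\mathrm{tr}(\hat\rho_u A_{ij})\,\mathrm{tr}(\hat\rho_c P_i^c).
\end{equation*}
The complement factor is a Pauli estimate, so it lies in $\{0,\pm3^{|P_i^c|}\}$ and is bounded by $3^l$ in absolute value. For the $U$-part I will use Hölder's inequality, $|\mathrm{tr}(\hat\rho_uA)|\le\|\hat\rho_u\|_p\|A\|_q$, choosing the split so that the $m$-qubit factor gives $2^m$ up to a constant. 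This needs care: the naive estimate $\|\hat\rho_u\|_1\|A\|_\infty$ gives $3^m$, and the Pauli expansion with Cauchy–Schwarz gives $\sqrt{10}^{\,m}$. The cleanest route I would try first combines two facts: each local factor has eigenvalues $\{2,-1\}$, and $A_{ij}$ is traceless with $\|A_{ij}\|_2\le 2\cdot2^{m/2}$. Pairing these via the Hilbert–Schmidt inner product should give a bound of the form $c\,2^m$.

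The target range is the following. If $|X_j|\le B:=\tfrac43\,2^m3^{l/2}Mc_\mathrm{max}$, then Hoeffding with range $2B$ yields exactly the constant $32/9$, since $(2B)^2/2=\tfrac{32}{9}4^m3^lM^2c_\mathrm{max}^2$. If I cannot get a clean almost-sure bound on the $3^l$ part, I will use a Bernstein-type bound instead. It uses the second moment, $\mathbb{E}[\mathrm{tr}(\hat\rho_cP)^2]=3^{|P|}$, and a variance of order $4^m3^l$ is what appears in the statement.

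\textbf{Step 4: conclude.} Hoeffding gives
\begin{equation*}
\Pr\bigl[|\bar X_j-\Delta E_\psi(U_j)|\ge\epsilon\bigr]\le 2\exp\!\bigl(-2N\epsilon^2/(2B)^2\bigr).
\end{equation*}
A union bound over $j=1,\dots,K$ makes the total failure probability at most $\delta$ once $N\ge (2B)^2\log(2K/\delta)/(2\epsilon^2)$, which is the stated $N$. All $K$ estimates reuse the same $N$ snapshots, so only the union bound depends on $K$.

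The main obstacle is Step 3: showing that the $m$-qubit part contributes only a factor $2^m$, rather than $3^m$ or $\sqrt{10}^{\,m}$, to the range or variance. Matching the constant $32/9$ depends on tracking this factor carefully.
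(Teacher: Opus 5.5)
There is a genuine gap in Step~3. The almost-sure bound $|X_j|\le\tfrac43\,2^m3^{l/2}Mc_{\max}$ that your Hoeffding argument needs is false, so no choice of H\"older exponents can deliver it. A single snapshot gives $\mathrm{tr}(\hat\rho P)\in\{0,\pm3^{|P|}\}$ for every Pauli $P$, and the extreme values occur with positive probability. Take $n=m=l=1$, $H=cZ$, $U=X$. Then $X_j=2c\,\mathrm{tr}(\hat\rho Z)\in\{0,\pm6c\}$, which exceeds your $B=\tfrac{8}{\sqrt3}|c|$. Hoeffding with the true range only certifies $N=72c^2\log(2K/\delta)/\epsilon^2$, which is more than the stated $\tfrac{128}{3}c^2\log(2K/\delta)/\epsilon^2$. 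The same happens in general. If $U_j$ is a Pauli anticommuting with $P_i^u$, then $A_{ij}=2P_i^u$ and the $i$-th term equals $\pm2c_i3^{|P_i|}$ on compatible snapshots. If $U_j$ is a Clifford sending $P_i^u$ to minus a weight-$m$ Pauli, the $Q^u$ factor is of order $3^m$. So your ``naive'' bound $\|\hat\rho_u\|_1\|A_{ij}\|_\infty=2\cdot3^m$ is essentially tight, and the complement factor really reaches $3^{|P_i^c|}$, not $3^{|P_i^c|/2}$. The squared range therefore exceeds $4^m3^l$ by factors exponential in both $m$ and $l$. No inequality that uses only the range, Hoeffding included, can give the theorem.

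What you reverse-engineered is a standard deviation, not a range. Your $B$ equals $\sqrt{4V/3}$ with $V=4^{m+1}3^{l-1}M^2c_{\max}^2$, which is the paper's variance bound. The quantity $\tfrac43V$ is Bernstein's proxy $V+B'\epsilon/3$ once the range term is absorbed. So your fallback is the paper's route, but its substantive ingredient, the variance bound, is missing from your plan.

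The paper gets that bound from the exact second moments $\mathbb{E}[\hat p_\alpha\hat p_\beta]\in\{0,\,3^{L}\mathrm{tr}(\rho P_\alpha P_\beta)\}$. This gives $\mathbb{E}[\hat{\Delta E}^2]\le f^\top Kf$ with $K=K^c\otimes\tilde K^{\otimes m}$, where $\lambda_{\max}(\tilde K)=4$, $\lambda_{\max}(K^c)\le M3^{l-1}$ and $\|f\|_2^2\le4\sum_i c_i^2$. This is what controls the cross-covariances among the $M4^m$ Pauli components. Within your operator formulation, an equivalent route is Minkowski's inequality over the $M$ terms combined with the product structure of the snapshots: $\mathbb{E}[\mathrm{tr}(\hat\rho\,A_{ij}\otimes P_i^c)^2]\le\|A_{ij}\|^2_{\mathrm{shadow}}\,3^{|P_i^c|}\le4^m\|A_{ij}\|_\infty^2\,3^{l-1}$, using the local-Pauli shadow-norm bound of Huang et al. This yields the same $V$.

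Bernstein then needs only a crude almost-sure bound $B'$ on $|X_j-\Delta E_\psi(U_j)|$; the paper uses twice $2Mc_{\max}4^m3^{l-1}$. With $\epsilon\le1$ and $B'\le V$ the exponent becomes $3N\epsilon^2/(8V)$, which gives the constant $32/9$. Note that $B'\le V$ is equivalent to $Mc_{\max}\ge1$, which the paper uses without stating it.
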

\Cref{theorem1} guarantees that once a finite set of unitaries $\{U_j(\theta^*_j)\}_{j=1}^K$ has been specified, their associated energy differences can be compared with high confidence. In particular, it enables reliable verification that the selected gate yields a genuine energy decrease, provided $\Delta E_{\psi_k}(U_{j^*}(\theta^*_{j^*}))>\epsilon$.
Notably, however, \cref{theorem1} does not provide guarantees for the uniform evaluation of a continuous parameter space, but applies only to a collection of fixed unitaries. Therefore, if one wishes to obtain guarantees for all parameter values queried during the optimization procedure---which may be necessary to identify a good maximizer---one must treat each parameter setting as a distinct unitary. If the classical optimizer requires at most $T$ function evaluations per gate, the total number of queried unitaries is at most $KT$, leading to a worst-case measurement cost scaling logarithmically in $KT$. In practice, one may impose a maximum number of per-iteration, per-gate function evaluations $T_\mathrm{max}$ as an explicit input parameter to SEGQE, thereby allocating the classical optimizer a fixed evaluation budget. 

Maximizing the energy differences associated with arbitrarily parametrized unitaries can be challenging and may require many function evaluations. In contrast, for unitaries of the form $U_j(\theta)=e^{-i\frac{\theta}{2}X_j }$ with $X_j^2=I$, as is the case for Pauli rotations commonly used in practice, the energy difference $\Delta E_\psi(U_j(\theta))$ admits an analytical determination of the maximizer. Such involutory generators are therefore particularly well suited for SEGQE. \Cref{theorem2} provides an upper bound on the measurement cost required to accurately estimate these global maxima and compute corresponding maximizing parameters for a finite collection of such unitaries, thereby yielding an explicit per-iteration sample complexity bound in this setting.

\begin{theorem}
    \label{theorem2}
    Given $\ket{\psi}$ as an arbitrary $n$-qubit state, we assume $\mathcal{S}=\{X_j\}_{j=1}^K$ is a finite collection of Hermitian gate generators
    with locality at most $m\leq n$ satisfying $X_j^2=I$, which define the unitaries $U_j(\theta):=\exp\left(-i\frac{\theta}{2}X_j\right)$.
    Let $M:=M(\mathcal{S})$. Then, for any $\epsilon, \delta \in (0,1]$, a collection of
    \begin{equation}
        N =\frac{\left(3+2\sqrt{2}\right)8\log\left(4K/\delta\right)}{9\epsilon^2}4^m3^{l}M^2c_\mathrm{max}^2
    \end{equation}
    independent classical shadows of $\ket{\psi}$ obtained from uniformly random local Pauli measurements is sufficient to simultaneously estimate, up to additive error at most $\epsilon$ and failure probability at most $\delta$, the maximal values of the energy differences $\Delta E_\psi(U_j)(\theta)$ for all $j=1,\ldots,K$. Moreover, from the same data one can compute estimators $\bar{\theta}^*_j\in [0,2\pi)$ of the maximizers $\theta^*_j\in [0,2\pi)$ such that, with probability at least $1-\delta$, the energy decrease achieved by applying $U_j(\bar{\theta}^*_j)$ to $\ket{\psi}$ deviates by at most $\epsilon$ from the estimated maximum and by at most $2\epsilon$ from the true maximum.
\end{theorem}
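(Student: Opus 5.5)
The plan is to reduce each parametrized energy difference to a two-parameter trigonometric function and then reuse the concentration argument behind \cref{theorem1} on $2K$ fixed observables.

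\textbf{Step 1: closed form of $\Delta E$.} Since $X_j^2=I$, we have $U_j(\theta)=\cos(\theta/2)I-i\sin(\theta/2)X_j$. Expanding $U_j^\dagger HU_j$ with half-angle identities gives
\begin{equation*}
\Delta E_\psi(U_j(\theta)) = a_j(1-\cos\theta) - b_j\sin\theta ,
\end{equation*}
with
\begin{equation*}
a_j=\tfrac12\left(\langle H\rangle-\langle X_jHX_j\rangle\right), \qquad b_j=\tfrac12\langle i[X_j,H]\rangle .
\end{equation*}
The signs in $b_j$ will be fixed in the calculation. Only the terms $i\in I^u$ contribute, because all other $P_i$ commute with $X_j$. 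Consequently
\begin{equation*}
\max_\theta \Delta E = a_j+\sqrt{a_j^2+b_j^2},
\end{equation*}
attained at the angle $\theta^*_j$ with $(\cos\theta^*_j,\sin\theta^*_j)\propto(-a_j,-b_j)$. Both $a_j$ and $b_j$ are expectation values of fixed Hermitian observables. Expanding $X_jP_i^uX_j$ and $i[X_j,P_i^u]$ in the $m$-qubit Pauli basis on $Q^u$, as in \cref{sec:procedure}, writes each as at most $M\cdot 4^m$ Pauli strings. Their coefficients are bounded by $c_{\max}$ times Hilbert--Schmidt overlaps. The extra factor $\tfrac12$ relative to the $\Delta E$ decomposition used in \cref{theorem1} should give a constant four times smaller, i.e.\ $8/9$ instead of $32/9$.

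\textbf{Step 2: estimation of $a_j$ and $b_j$.} I will apply the same shadow-norm and concentration argument used for \cref{theorem1} to the $2K$ observables $\{a_j,b_j\}_{j=1}^K$. The variance of each single-shot Pauli estimator is bounded by $3^{\text{locality}}$, giving the factor $4^m3^l M^2c_{\max}^2$. A union bound over $2K$ observables turns $\log(2K/\delta)$ into $\log(4K/\delta)$. The target accuracy for each of $\hat a_j,\hat b_j$ is $\eta=\epsilon/(1+\sqrt2)$, so that $1/\eta^2=(3+2\sqrt2)/\epsilon^2$. This reproduces the stated $N$.

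\textbf{Step 3: propagating the error to the maximum.} On the success event,
\begin{equation*}
\left|\hat a+\sqrt{\hat a^2+\hat b^2}-a-\sqrt{a^2+b^2}\right|\le |\hat a-a|+\left\|(\hat a,\hat b)-(a,b)\right\|_2\le \eta+\sqrt2\,\eta=\epsilon ,
\end{equation*}
using the reverse triangle inequality for the Euclidean norm.

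\textbf{Step 4: guarantee for the estimated maximizer.} Define $\hat f(\theta)=\hat a(1-\cos\theta)-\hat b\sin\theta$. Then, uniformly in $\theta$,
\begin{equation*}
|\hat f(\theta)-f(\theta)|\le|\Delta a|+|\Delta a\cos\theta+\Delta b\sin\theta|\le \eta+\sqrt2\,\eta=\epsilon ,
\end{equation*}
by Cauchy--Schwarz. Take $\bar\theta^*_j=\arg\max\hat f$, which is computable in closed form. Then $f(\bar\theta^*_j)\ge \hat f(\bar\theta^*_j)-\epsilon$, which is within $\epsilon$ of the estimated maximum. Moreover, $\hat f(\bar\theta^*_j)\ge \hat f(\theta^*_j)\ge f(\theta^*_j)-\epsilon$, which gives the $2\epsilon$ bound to the true maximum.

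\textbf{Main obstacle.} The delicate step is Step 2: getting the constants exactly right. This requires bounding the shadow-estimator range or variance for the composite observables $a_j,b_j$ with the correct factor $4^m$, and keeping the number of contributing terms at most $M$ per gate. I would mirror the proof of \cref{theorem1} line by line, inserting the $\tfrac12$ prefactors to recover $8/9$.
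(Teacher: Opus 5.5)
Your proposal is correct and is essentially the paper's proof up to a factor-$\tfrac12$ normalization ($A_j=2a_j$, $B_j=2b_j$, target accuracy $\epsilon_F=2\epsilon/(1+\sqrt2)=2\eta$). The paper likewise bounds $\hat A_j,\hat B_j$ through the variance and single-shot lemmas, applies Bernstein's inequality with a union bound over $2K$ observables, and finishes with the same reverse-triangle-inequality and add-and-subtract steps. For $\hat A_j$ it uses that $X_j$ is itself a unitary of locality at most $m$; for $\hat B_j$ it uses $\|f'_j\|_2^2\le 4\sum_{i\in I^j}c_i^2$ and $\max_\alpha|f'_{j,\alpha}|\le 2c_{\max}$ for the commutator's Pauli coefficients. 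You should state these commutator bounds explicitly, because the $4^m$ comes from the covariance-eigenvalue argument $\lambda_{\max}(\tilde K)=4$, not from per-term variances $3^{\text{locality}}$. Your maximizer $(\cos\theta^*_j,\sin\theta^*_j)\propto(-a_j,-b_j)$ also has the correct sign, whereas the paper's $\theta^*_j=\arctan2(B_j,A_j)$, as written, selects the minimizer.
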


Pauli operators $P\in\mathcal{P}^n$ satisfy $P^2=I$ and therefore fall within the class of generators considered in \cref{theorem2}. In particular, when the
gate set considered by SEGQE consists of all parameterized Pauli rotations with locality at most $m$, the resulting per-iteration sample complexity is given by the following corollary.
\begin{corollary}
    \label{cor:pauli-rotations}
    In the special case of \cref{theorem2} where the gate generators are given by
    $\mathcal{S}:= \{P\in \mathcal{P}^n:\text{$P$ has locality at most $m$}\}$, that is, all Pauli operators on $n$ qubits with locality at most $m$,
    for any $\epsilon, \delta \in (0,1]$, a collection of 
    \begin{equation}
        N =\frac{\left(3+2\sqrt{2}\right)8\log\left(4^{m+1}\binom{n}{m}/\delta\right)}{9\epsilon^2}3^{m+l}M^2c_\mathrm{max}^2
    \end{equation}
    independent classical shadows of $\ket{\psi}$ obtained from uniformly random local Pauli measurements is sufficient to guarantee the conclusions of \cref{theorem2}.
\end{corollary}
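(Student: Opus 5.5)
The corollary should follow from \cref{theorem2} with two changes. First, I will count the generators to fix $K$. Second, I will refine the factor $4^m3^l$, which comes from the generic expansion of $U^\dagger P^u_i U$ over all $4^m$ Paulis on $Q^u$, into $3^{m+l}$ by using the special structure of Pauli rotations.

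\textbf{Counting.} Assume $m\le n$. Every Pauli of locality at most $m$ is supported inside some $m$-element subset of qubits, and on a fixed $m$-subset there are exactly $4^m$ Paulis. Hence
\[
K=|\mathcal{S}|\le \binom{n}{m}4^m ,
\]
and so $\log(4K/\delta)\le\log\bigl(4^{m+1}\binom{n}{m}/\delta\bigr)$. Because $N$ in \cref{theorem2} is monotone increasing in $K$, this substitution is legitimate. It also covers repeated supports and the identity, since overcounting only enlarges the bound.

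\textbf{Refining the locality factor.} For $X_j=P$ a Pauli and $U_j(\theta)=e^{-i\theta P/2}$, I will compute the conjugated terms exactly rather than through the generic $4^m$-term expansion:
\begin{itemize}
\item If $P_i$ commutes with $P$, then $U^\dagger P_i U=P_i$, and the term drops out of the energy difference.
\item If $P_i$ anticommutes with $P$, then $U^\dagger P_iU=\cos\theta\,P_i+\sin\theta\,(iPP_i)$, up to a sign convention.
\end{itemize}
Consequently
\[
\Delta E_\psi(U_j(\theta))=A_j(1-\cos\theta)-B_j\sin\theta ,
\qquad A_j=\sum_{i\ \mathrm{anticomm.}}c_i\langle P_i\rangle ,\qquad B_j=\sum_{i\ \mathrm{anticomm.}}c_i\langle iPP_i\rangle .
\]
Both sums run over at most $M$ Hamiltonian terms. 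Every Pauli involved, namely $P_i$ and $iPP_i$, has locality at most $m+l$. Its shadow norm under random local Pauli measurements is therefore at most $3^{m+l}$, replacing the $4^m3^l$ that the generic argument pays: there, $4^m$ coefficients $r_{i\mu}$ appear and each Pauli has locality at most $m+l$, but the count was bounded separately.

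\textbf{Rerunning the argument of \cref{theorem2}.} I will then repeat the proof of \cref{theorem2} with these ingredients:
\begin{enumerate}
\item Apply a Hoeffding/median-of-means bound, with a union bound over the at most $2K$ sums $A_j,B_j$, to obtain $|\hat A_j-A_j|,|\hat B_j-B_j|\le\eta$ simultaneously. The required sample count scales as $3^{m+l}M^2c_{\max}^2\log(4K/\delta)/\eta^2$.
\item Use the closed-form maximum $A_j+\sqrt{A_j^2+B_j^2}$ and the maximizer $\bar\theta^*_j$ determined by $\arg(\hat A_j,\hat B_j)$.
\item Propagate the errors. The maximum is Lipschitz in $(A,B)$ with constant at most $1+\sqrt2$, which produces the $(3+2\sqrt2)=(1+\sqrt2)^2$ factor once $\eta$ is chosen proportional to $\epsilon/(1+\sqrt2)$.
\end{enumerate}
The $\epsilon$-closeness of the achieved decrease to the estimated maximum, and the $2\epsilon$-closeness to the true maximum, then follow exactly as in \cref{theorem2}, by the triangle inequality applied to the uniformly accurate estimate of $\theta\mapsto\Delta E$.

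\textbf{Main obstacle.} The delicate step is the locality refinement. I must check that the proof of \cref{theorem2} uses the factor $4^m$ only as the product of the number of nonzero coefficients and their $\ell_1$-type size. If that holds, the two-term expansion above, with $|\cos\theta|,|\sin\theta|\le1$, legitimately reduces it, and the constants $32/9$ and $8(3+2\sqrt2)/9$ carry over unchanged with $4^m3^l$ replaced by $3^{m+l}$. If instead the proof bounds the variance differently, I would redo the variance bound directly for the estimators of $A_j$ and $B_j$. Each is a sum of at most $M$ Paulis of locality at most $m+l$, with coefficients bounded by $c_{\max}$, so its single-shot estimator is bounded by $Mc_{\max}3^{m+l}$ and has variance at most $M^2c_{\max}^2 3^{m+l}$.
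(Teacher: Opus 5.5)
Your route is the paper's: bound $K\le 4^m\binom{n}{m}$, use that conjugation by a Pauli does not mix Pauli strings so that $A_j,B_j$ are sums of at most $M$ Pauli expectation values instead of $M4^m$, and rerun \cref{theorem2}. However, the corollary claims an explicit $N$, and the ingredients you write down do not produce it.

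\textbf{Weight count.} The weight count is off by one. $P$ and $P_i$ can only anticommute if they share a qubit, so $iPP_i$ has locality at most $m+l-1$, and $P_i$ has locality at most $l\le m+l-1$. By \cref{lemma:second_moments}, every second moment $\mathbb{E}[\hat p_\alpha\hat p_\beta]$ entering $\hat A_j$ or $\hat B_j$ is therefore at most $3^{m+l-1}$, and that $-1$ is needed. In \cref{theorem1} the factor $4^m3^l$ in $N$ is really $\frac{8}{3}\cdot 4^{m+1}3^{l-1}=\frac{32}{9}4^m3^l$, coming from the variance bound of \cref{lemma:variance_energy}. To land on $\frac{32}{9}3^{m+l}$ you need $\mathrm{Var}\le 4\cdot 3^{m+l-1}M^2c_{\max}^2$ and single-shot bound $2Mc_{\max}3^{m+l-1}$. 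These are in the paper's normalization $A_j=\langle H\rangle-\langle PHP\rangle=2\sum_i c_i\langle P_i\rangle$; in your half-size normalization they read $3^{m+l-1}M^2c_{\max}^2$ and $Mc_{\max}3^{m+l-1}$. This is exactly the paper's ``$3^m$ instead of $4^m$'' in \cref{lemma:variance_energy} and \cref{lemma:singleshot_energy}. Your bounds with $3^{m+l}$ are three times too large. Pushed through the same argument, they give an $N$ three times the stated one.

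\textbf{Concentration inequality.} The concentration step has to be Bernstein's inequality, as in the proof of \cref{theorem1}. It must use both the variance bound and the almost-sure bound, with $V+B\epsilon_F/3\le\frac43V$ producing the factor $\frac83$. Hoeffding sees only the almost-sure range, so it gives $N\propto(Mc_{\max}3^{m+l-1})^2$, i.e.\ $9^{m+l}$ rather than $3^{m+l}$ scaling. Median-of-means has the right scaling but much larger constants, and it uses a different estimator than the empirical means $\bar A_j,\bar B_j$. Neither reproduces $8(3+2\sqrt2)/9$.

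With weight $m+l-1$ and Bernstein, the rest of your plan goes through as in the paper. That is the union bound over $2K$ quantities, $\log(4K/\delta)\le\log\bigl(4^{m+1}\binom{n}{m}/\delta\bigr)$, the $(1+\sqrt2)$ Lipschitz step with $\eta=\epsilon/(1+\sqrt2)$ in your normalization, and the $\epsilon$/$2\epsilon$ triangle inequality.
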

Both \cref{theorem1} and \cref{theorem2}, as well as \cref{cor:pauli-rotations}, provide worst-case upper bounds on the per-iteration sample complexity of SEGQE. In all cases, the measurement cost scales exponentially in both the locality $m$ of the candidate gates and the locality $l$ of the Hamiltonian terms, while depending only logarithmically on the number of considered gates. The exponential dependence on the Hamiltonian locality $l$ restricts SEGQE to local Hamiltonians, and the exponential dependence on $m$ restricts the practicality of SEGQE to local gate sets. On the other hand, the logarithmic dependence on the size of the gate sets implies that considering many local gates incurs only a mild measurement overhead. Furthermore, for fixed gate locality, the number of candidate gates grows at most polynomially with system size $n$. As a result, the per-iteration sample complexity depends on $n$ only logarithmically through the size of the gate set. Moreover, the sample complexity scales quadratically with the parameter $M$ and with the magnitude of the Hamiltonian coefficients $c_i$. This reflects the fact that, for each unitary $U_j$, 
only Hamiltonian terms overlapping its support contribute to the corresponding 
energy difference. As a result, the estimation error is governed by the local interaction structure of the Hamiltonian rather than by the total system size. In particular, for sparse interaction structures $M$ remains bounded as the system size grows, whereas for dense objectives $M$ may grow with system size, leading to higher measurement cost.
Finally, we emphasize that all bounds presented above are worst-case guarantees and are generally not tight. In practice, exploiting the structure of a specific Hamiltonian can substantially reduce the number of measurement shots required; see \cref{app:tfi} for an explicit example based on the transverse-field Ising model. Full proofs are deferred to \cref{appendix:proofs}. 

\section{Numerical Experiments}
\label{sec:numerics}
We now evaluate the performance of SEGQE in numerical simulations. All results are obtained from noiseless, exact state-vector simulations.
\subsection{Transverse-Field Ising Model}
\label{sec:TFI}
To benchmark SEGQE, we approximate the ground state and ground-state energy of the transverse-field Ising (TFI) model with open boundary conditions, defined by
\begin{equation}
    \label{eq:TFI}
    H=w\sum^n_{i=1}Z_i+J\sum^{n-1}_{i=1}X_iX_{i+1},
\end{equation}
where $w$ denotes the transverse-field strength, $J$ the nearest neighbor coupling constant, and $n$ the number of spins (qubits). The gate set considered by SEGQE in this experiment includes all Pauli rotations acting on most $m=2$ qubits. The per-iteration measurement cost is therefore upper-bounded by \cref{cor:pauli-rotations}. Exploiting the structure of the Hamiltonian reduces this bound by approximately a factor of $6$ (see \cref{app:tfi}). 
\begin{figure}[htbp]
  \includegraphics{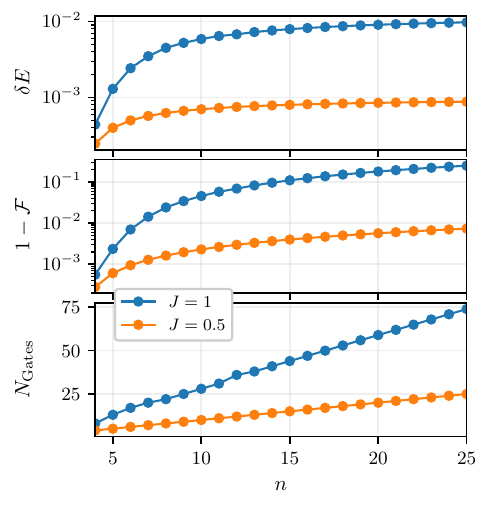}
  \vspace{-1em}
  \caption{Convergence properties of SEGQE applied to the open-boundary transverse-field Ising model at criticality ($J=w=1$, blue circles) and a gapped, field-dominated regime ($J=w/2$, orange circles). (Top) Relative energy error $\delta E=(E_\mathrm{exact}{-}E_\mathrm{SEGQE})/E_\mathrm{exact}$ at convergence as a function of system size $n$. (Middle) Infidelity with the exact ground state $1-\mathcal{F}$. (Bottom) Number of appended two-qubit Pauli rotations required for convergence $N_\mathrm{Gates}$. In all cases, SEGQE terminates once no candidate gate yields an energy reduction exceeding $\Delta=10^{-3}$.}
  \label{fig:2a}
\end{figure}
\Cref{fig:2a} shows the convergence properties of SEGQE applied to the open-boundary TFI model at the critical point $J=w=1$ and in a gapped, field-dominated regime $J=w/2$. In both settings, the number of SEGQE iterations, equivalently the number of appended gates $N_\mathrm{Gates}$, grows approximately linearly with system size $n$, with a slope at criticality roughly three times larger than in the gapped regime. This behavior is expected: away from criticality, correlations are short-ranged and few local rotations suffice to capture the ground state, whereas at criticality longer-range correlations emerge and must be incorporated through additional gates. 

Consistent with this picture, inspecting the circuits generated by SEGQE reveals that in the gapped regime the algorithm converges after appending only nearest-neighbor two-qubit Pauli rotations. At the critical point, SEGQE subsequently appends two additional sequences of gates acting on next-nearest and next-next-nearest neighbor qubit pairs, followed by a small number of corrective gates before terminating (see \cref{app:tfi}). This explains the larger prefactor in the linear scaling and demonstrates that SEGQE automatically adapts the circuit structure to the correlation properties of the target ground state. 

\Cref{fig:2a} (top) visualizes the behavior of the relative energy error $\delta E:=(E_\mathrm{exact}{-}E_\mathrm{SEGQE})/E_\mathrm{exact}$ as a function of system size. In both regimes, $\delta E$ initially increases with $n$ before leveling off, remaining below $1\%$ at criticality and below $0.1\%$ in the gapped phase across the system sizes considered. This indicates that SEGQE maintains controlled energy accuracy as the system size increases. 
Motivated by the need for approximate ground states as initial states for fault-tolerant algorithms, such as phase estimation, \cref{fig:2a} (middle) also reports the infidelity with the exact ground state, $1-\mathcal{F}=1-\left|\langle \psi_\mathrm{SEGQE}|\psi_\mathrm{exact}\rangle \right|^2$. For context, the success probability of phase estimation scales linearly with the fidelity $\mathcal{F}$, so the required repetition overhead scales as $1/\mathcal{F}$. As expected, the infidelity increases with system size, reflecting the accumulation of local imperfections. Nevertheless, this increase remains gradual over the system sizes considered and contrasts with the exponentially small overlap expected for generic random states. This suggests that SEGQE prepares structured approximations to the ground state that retain nontrivial overlap with the true ground state and may serve as suitable starting points for further processing.

\begin{figure}[htbp]
  \centering
  \includegraphics{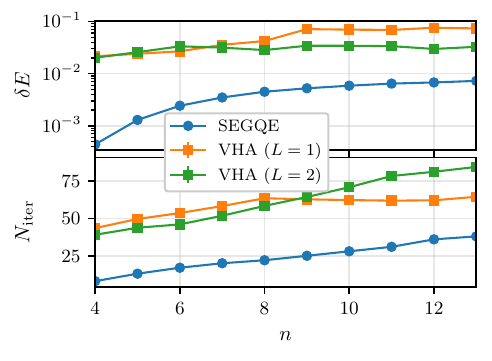}
  \vspace{-2em}
  \caption{Qualitative comparison between SEGQE and a variational quantum eigensolver (VQE) based on a variational Hamiltonian ansatz (VHA) applied to the open-boundary transverse-field model at the critical point $J=w=1$. (Top) Relative energy error $\delta E = (E_\mathrm{exact}{-}E)/E_\mathrm{exact}$ at convergence as a function of system size $n$. (Bottom) Number of iterations $N_\mathrm{iter}$ required for convergence. Results are shown for SEGQE (blue circles) and for VHA-VQE with one ($L=1$, orange squares) and two ($L=2$, green squares) ansatz layers. For VHA-VQE, circuit parameters are initialized randomly and results are averaged over 300 independent initializations. Error bars indicate the standard error of the mean but are smaller than the marker size. Both algorithms terminate once the energy difference between two successive iterations falls below $\Delta=10^{-3}$.
  }
  \label{fig:2b}
\end{figure}
\Cref{fig:2b} provides a qualitative comparison between SEGQE and a VQE based on a variational Hamiltonian ansatz (VHA) at criticality ($w=J=1$). We consider two VHA circuits with different expressivities, namely a single-layer ($L=1$) and a two-layer ($L=2$) ansatz. The VHA circuit is optimized using a standard gradient-based method (ADAM, with parameter-shift gradients), with all parameters initialized randomly, and results are averaged over 300 independent initializations. Across all considered systems, SEGQE converges in fewer iterations and achieves lower final energies than VHA-VQE, illustrating that SEGQE can produce competitive ground-state estimates using a greedy circuit construction. Although iteration counts are not directly comparable due to fundamentally different optimization strategies, the theoretical results of \cref{sec:guarantees} provide important context: the per-iteration sample complexity of SEGQE scales only logarithmically with system size, whereas gradient-based VHA-VQE typically requires at least $\mathcal{O}(n)$ measurements per iteration. 
\subsection{Random Local Hamiltonians}
While the TFI model provides a structured and physically motivated benchmark, it does not probe performance in unstructured settings. We therefore study SEGQE on an ensemble of random local Hamiltonians defined as
\begin{equation}
\label{eq:RLH}
\begin{aligned}  
    H=&\sum^n_{i=1}\sum_{\alpha\in\{x,y,z\}}w_i^\alpha \sigma^\alpha_i\\+&\sum^n_{i=1}\sum_{\alpha,\beta\in\{x,y,z\}}J^{\alpha \beta}_i\sigma_i^\alpha\sigma_{(i+1)\mod n}^\beta,
\end{aligned} 
\end{equation}
where the coefficients $w_{i}^\alpha$ and $J_i^{\alpha \beta}$ are drawn independently from $\mathcal{N}(0,1)$. Here, $\sigma^\alpha_i$ denotes the Pauli operator $\sigma^\alpha\in \{X,Y,Z\}$ acting on qubit $i$, and periodic boundary conditions are imposed.
\begin{figure}
  \centering
  \includegraphics{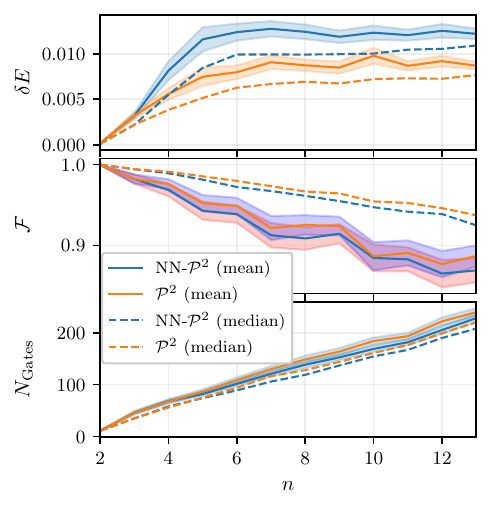}
  \vspace{-1em}
  \caption{Convergence properties of SEGQE applied to 500 instances of the random local Hamiltonians defined in \cref{eq:RLH}. (Top) Relative energy error $\delta E=(E_\mathrm{exact}{-}E_\mathrm{SEGQE})/E_\mathrm{exact}$ at convergence as a function of system size $n$. (Middle) Ground-state fidelity $\mathcal{F}$. (Bottom) Number of appended two-qubit gates required for convergence $N_\mathrm{Gates}$. Results are shown for SEGQE using all two-qubit Pauli rotations ($\mathcal{P}^2$, orange) and for the restriction to nearest-neighbor two-qubit Pauli rotations (NN-$\mathcal{P}^2$, blue). Solid lines indicate the mean over all considered instances, dashed lines the median. Shaded regions indicate the two-sigma confidence interval of the mean. In all cases, SEGQE terminates once no candidate gate yields a maximal energy reduction exceeding $\Delta=10^{-3}$.
  }
  \label{fig:a}
\end{figure}

\Cref{fig:a} shows the convergence properties of SEGQE applied to the ensemble of local Hamiltonians defined in \cref{eq:RLH}, considering two different gate sets: all two-qubit Pauli rotations and their restriction to nearest-neighbors. In both cases, the number of appended gates required for convergence grows approximately linearly with system size $n$, consistent with the behavior observed for the TFI model. Across the system sizes considered, the relative energy error depends only weakly on $n$ and appears to saturate at the percent level, while the ground-state fidelity decreases approximately linearly. Restricting SEGQE to nearest-neighbor gates leads to systematically worse energy accuracy and ground-state fidelity, indicating that longer-range two-qubit rotations are important to accurately capture the ground states of these unstructured Hamiltonians. 

The weak system-size dependence of the final energy error can be understood from the local and random nature of the considered Hamiltonians. In one-dimensional chains with random couplings, ground-state correlations typically decay rapidly with distance, yielding a finite correlation length on average. Importantly, this does not imply that correlations are strictly limited to nearest neighbors. Rather, accurately capturing the ground state generally requires incorporating correlations over a small but finite range. This is consistent with our observation that allowing a limited number of longer-range two-qubit rotations significantly improves energy accuracy, while further extensions of the interaction range yield diminishing returns, explaining the observed mild growth and apparent saturation of the energy error with increasing system size.

To further investigate the dependence of SEGQE on the choice of gate set, \cref{fig:b} plots the relative energy error $\delta E$ as a function of the number of appended gates $N_\mathrm{Gates}$ at fixed system size ($n=5$) for four distinct gate sets: 2-qubit Pauli rotations ($\mathcal{P}^2$), arbitrary 2-qubit gates ($\mathrm{SU}(4)$), and their restrictions to nearest-neighbor qubits, denoted NN-$\mathcal{P}^2$ and NN-$\mathrm{SU}(4)$. 
\begin{figure}[htbp]
  \centering
  \includegraphics{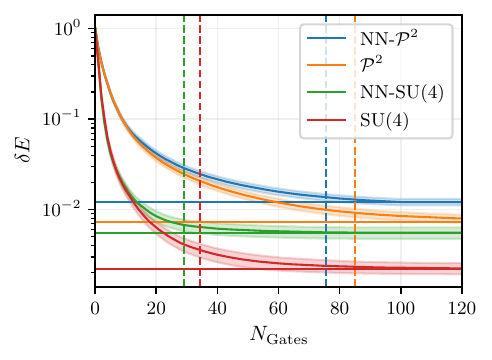}
  \caption{Dependence of SEGQE convergence on gate-set for random local Hamiltonians at fixed system size $n=5$. Shown is the relative energy error $\delta E=(E_\mathrm{exact}{-}E_\mathrm{SEGQE})/E_\mathrm{exact}$ as a function of the number of appended gates $N_\mathrm{Gates}$ for four gate sets: nearest-neighbor two-qubit rotations (NN-$\mathcal{P}^2$, blue), arbitrary-distance two-qubit Pauli rotations ($\mathcal{P}^2$, orange), nearest-neighbor arbitrary two-qubit unitaries (NN-$\mathrm{SU}(4)$, green), and arbitrary-distance two-qubit unitaries ($\mathrm{SU}(4)$, red). Solid lines indicate the mean over 100 independent random Hamiltonian instances, and shaded regions indicate the one-sigma confidence interval of the mean. Horizontal lines show the average final energy error at convergence for each gate set, while vertical dashed lines indicate the corresponding average number of appended gates at convergence. SEGQE terminates once no candidate gate yields a maximal energy reduction exceeding $\Delta=10^{-3}$. 
  }
  \label{fig:b}
\end{figure}
Two distinct effects can be observed. First, for a fixed gate geometry, allowing arbitrary two-qubit unitaries leads to lower final energies and faster convergence, reflecting the increased local expressivity when considering all of $\mathrm{SU}(4)$. Second, restricting the gate set to nearest neighbors results in earlier saturation at higher energy, indicating that geometric restrictions limit the correlations that can be captured. The plot further shows that while nearest-neighbor variants converge more rapidly, this behavior reflects premature convergence rather than more efficient optimization. Overall, these results indicate that both increased local gate expressivity and longer-range interactions improve the achievable energy accuracy, highlighting that SEGQE should exploit richer gate sets when available. Finally, we note that in the simulations shown here, the classical optimization over $\mathrm{SU}(4)$ required fewer than $T=2\times 10^4$ function evaluations for each appended gate. Although this may appear costly, the logarithmic dependence of the SEGQE sample complexity on the size of the gate set (see \cref{sec:guarantees}) ensures that this overhead remains moderate. For example, fixing the confidence parameter to $\delta=0.01$, increasing the optimization budget from $T=100$ to $T=2\times 10^4$ increases the required number of measurements by a factor of less than $1.5$ for the $5$-qubit system considered in \cref{fig:b}. Assuming that the number of required function evaluations does not increase with system size, this relative overhead becomes even less pronounced for larger systems.

\section{Conclusion}
\label{sec:conclusion}
In this work, we introduced the Shadow Enhanced Greedy Quantum Eigensolver (SEGQE), a greedy, shadow-assisted hybrid algorithm for measurement-efficient ground-state preparation. 
The central feature of SEGQE is its favorable measurement scaling. At each iteration, a single batch of classical shadows suffices to evaluate a large number of candidate gates entirely in classical post-processing. We proved rigorous worst-case bounds showing that the per-iteration measurement cost depends only logarithmically on the number of candidate gates. In addition, our numerical experiments on transverse-field Ising models and ensembles of random local Hamiltonians indicate that the required number of iterations grows approximately linearly with system size. Together, these results suggest an overall scaling behavior that is particularly well aligned with early fault-tolerant regimes, where logical measurement rates are relatively slow while moderately deep logical circuits are feasible.
Furthermore, SEGQE cleanly separates quantum data acquisition from the computationally intensive search over gate additions, enabling the efficient integration of high-performance computing resources and shifting the dominant optimization cost to classical hardware. 

The flexibility of the SEGQE framework further allows straightforward adaptation to hardware-native logical gate sets and fault-tolerant compilation constraints, making it compatible with realistic fault-tolerant architectures. Future work includes tightening instance-dependent performance guarantees, extending the framework to related tasks such as excited-state preparation, and developing parallelizable classical optimization strategies tailored to shadow-based objective estimation.
Overall, SEGQE illustrates how classical shadows can serve as a core primitive in adaptive, measurement-efficient state-preparation strategies for fault-tolerant quantum algorithms.

\section{Acknowledgments}
This research was supported by a grant funded by Moderna and Quantum Motion for an industrial PhD Studentship.
B.K. thanks UKRI for the Future Leaders Fellowship Theory to Enable Practical Quantum Advantage (MR/Y015843/1).
The authors also acknowledge funding from the
EPSRC projects Robust and Reliable Quantum Computing (RoaRQ, EP/W032635/1)
and Software Enabling Early Quantum Advantage (SEEQA, EP/Y004655/1).
This research was funded in part by UKRI (MR/Y015843/1).
For the purpose of Open Access, the author has applied a CC BY public copyright licence
to any Author Accepted Manuscript version arising from this submission.

\bibliographystyle{apsrev4-2}
\bibliography{refs}

@article{farhi2014quantum,
  title={A quantum approximate optimization algorithm},
  author={Farhi, Edward and Goldstone, Jeffrey and Gutmann, Sam},
  journal={arXiv preprint arXiv:1411.4028},
  year={2014}
}

@article{peruzzo2014variational,
  title={A variational eigenvalue solver on a photonic quantum processor},
  author={Peruzzo, Alberto and McClean, Jarrod and Shadbolt, Peter and Yung, Man-Hong and Zhou, Xiao-Qi and Love, Peter J and Aspuru-Guzik, Al{\'a}n and O’brien, Jeremy L},
  journal={Nature Communications},
  volume={5},
  number={1},
  pages={4213},
  year={2014},
  publisher={Nature Publishing Group UK London}
}

@article{cerezo2021variational,
  title={Variational quantum algorithms},
  author={Cerezo, Marco and Arrasmith, Andrew and Babbush, Ryan and Benjamin, Simon C and Endo, Suguru and Fujii, Keisuke and McClean, Jarrod R and Mitarai, Kosuke and Yuan, Xiao and Cincio, Lukasz and others},
  journal={Nature Reviews Physics},
  volume={3},
  number={9},
  pages={625--644},
  year={2021},
  publisher={Nature Publishing Group UK London}
}

@article{huang2020predicting,
  title={Predicting many properties of a quantum system from very few measurements},
  author={Huang, Hsin-Yuan and Kueng, Richard and Preskill, John},
  journal={Nature Physics},
  volume={16},
  number={10},
  pages={1050--1057},
  year={2020},
  publisher={Nature Publishing Group UK London}
}

@article{bittel2021training,
  title={Training variational quantum algorithms is NP-hard},
  author={Bittel, Lennart and Kliesch, Martin},
  journal={Physical Review Letters},
  volume={127},
  number={12},
  pages={120502},
  year={2021},
  publisher={APS}
}

@article{kandala2017hardware,
  title={Hardware-efficient variational quantum eigensolver for small molecules and quantum magnets},
  author={Kandala, Abhinav and Mezzacapo, Antonio and Temme, Kristan and Takita, Maika and Brink, Markus and Chow, Jerry M and Gambetta, Jay M},
  journal={Nature},
  volume={549},
  number={7671},
  pages={242--246},
  year={2017},
  publisher={Nature Publishing Group}
}

@article{moll2018quantum,
  title={Quantum optimization using variational algorithms on near-term quantum devices},
  author={Moll, Nikolaj and Barkoutsos, Panagiotis and Bishop, Lev S and Chow, Jerry M and Cross, Andrew and Egger, Daniel J and Filipp, Stefan and Fuhrer, Andreas and Gambetta, Jay M and Ganzhorn, Marc and others},
  journal={Quantum Science and Technology},
  volume={3},
  number={3},
  pages={030503},
  year={2018},
  publisher={IOP Publishing}
}

@article{farhi2018classification,
  title={Classification with quantum neural networks on near term processors},
  author={Farhi, Edward and Neven, Hartmut},
  journal={arXiv preprint arXiv:1802.06002},
  year={2018}
}

@article{cong2019quantum,
  title={Quantum convolutional neural networks},
  author={Cong, Iris and Choi, Soonwon and Lukin, Mikhail D},
  journal={Nature Physics},
  volume={15},
  number={12},
  pages={1273--1278},
  year={2019},
  publisher={Nature Publishing Group UK London}
}

@article{wang2022state,
  title={State preparation boosters for early fault-tolerant quantum computation},
  author={Wang, Guoming and Sim, Sukin and Johnson, Peter D},
  journal={Quantum},
  volume={6},
  pages={829},
  year={2022},
  publisher={Verein zur F{\"o}rderung des Open Access Publizierens in den Quantenwissenschaften}
}

@article{romero2018strategies,
  title={Strategies for quantum computing molecular energies using the unitary coupled cluster ansatz},
  author={Romero, Jonathan and Babbush, Ryan and McClean, Jarrod R and Hempel, Cornelius and Love, Peter J and Aspuru-Guzik, Al{\'a}n},
  journal={Quantum Science and Technology},
  volume={4},
  number={1},
  pages={014008},
  year={2018},
  publisher={IOP Publishing}
}

@article{wecker2015progress,
  title={Progress towards practical quantum variational algorithms},
  author={Wecker, Dave and Hastings, Matthew B and Troyer, Matthias},
  journal={Physical Review A},
  volume={92},
  number={4},
  pages={042303},
  year={2015},
  publisher={APS}
}

@article{grimsley2019adaptive,
  title={An adaptive variational algorithm for exact molecular simulations on a quantum computer},
  author={Grimsley, Harper R and Economou, Sophia E and Barnes, Edwin and Mayhall, Nicholas J},
  journal={Nature Communications},
  volume={10},
  number={1},
  pages={3007},
  year={2019},
  publisher={Nature Publishing Group UK London}
}

@article{zimboras2025myths,
  title={Myths around quantum computation before full fault tolerance: What no-go theorems rule out and what they don't},
  author={Zimbor{\'a}s, Zolt{\'a}n and Koczor, B{\'a}lint and Holmes, Zo{\"e} and Borrelli, Elsi-Mari and Gily{\'e}n, Andr{\'a}s and Huang, Hsin-Yuan and Cai, Zhenyu and Ac{\'\i}n, Antonio and Aolita, Leandro and Banchi, Leonardo and others},
  journal={arXiv preprint arXiv:2501.05694},
  year={2025}
}

@article{mcclean2018barren,
  title={Barren plateaus in quantum neural network training landscapes},
  author={McClean, Jarrod R and Boixo, Sergio and Smelyanskiy, Vadim N and Babbush, Ryan and Neven, Hartmut},
  journal={Nature Communications},
  volume={9},
  number={1},
  pages={4812},
  year={2018},
  publisher={Nature Publishing Group UK London}
}

@article{larocca2024review,
  title={A review of barren plateaus in variational quantum computing},
  author={Larocca, Martin and Thanasilp, Supanut and Wang, Samson and Sharma, Kunal and Biamonte, Jacob and Coles, Patrick J and Cincio, Lukasz and McClean, Jarrod R and Holmes, Zo{\"e} and Cerezo, M},
  journal={arXiv preprint arXiv:2405.00781},
  year={2024}
}

@article{cerezo2021cost,
  title={Cost function dependent barren plateaus in shallow parametrized quantum circuits},
  author={Cerezo, Marco and Sone, Akira and Volkoff, Tyler and Cincio, Lukasz and Coles, Patrick J},
  journal={Nature Communications},
  volume={12},
  number={1},
  pages={1791},
  year={2021},
  publisher={Nature Publishing Group UK London}
}

@article{holmes2022connecting,
  title={Connecting ansatz expressibility to gradient magnitudes and barren plateaus},
  author={Holmes, Zo{\"e} and Sharma, Kunal and Cerezo, Marco and Coles, Patrick J},
  journal={PRX Quantum},
  volume={3},
  number={1},
  pages={010313},
  year={2022},
  publisher={APS}
}

@article{wang2021noise,
  title={Noise-induced barren plateaus in variational quantum algorithms},
  author={Wang, Samson and Fontana, Enrico and Cerezo, Marco and Sharma, Kunal and Sone, Akira and Cincio, Lukasz and Coles, Patrick J},
  journal={Nature Communications},
  volume={12},
  number={1},
  pages={6961},
  year={2021},
  publisher={Nature Publishing Group UK London}
}

@article{anschuetz2022quantum,
	title={Quantum variational algorithms are swamped with traps},
	author={Anschuetz, Eric R and Kiani, Bobak T},
	journal={Nature Communications},
	volume={13},
	number={1},
	pages={7760},
	year={2022},
	publisher={Nature Publishing Group UK London}
}

@article{hwang2025preparing,
	title={Preparing ground and excited states using adiabatic CoVaR},
	author={Hwang, Wooseop and Koczor, B{\'a}lint},
	journal={New Journal of Physics},
	volume={27},
	number={2},
	pages={023025},
	year={2025},
	publisher={IOP Publishing}
}

@article{koczor2022quantum,
	title={Quantum analytic descent},
	author={Koczor, B{\'a}lint and Benjamin, Simon C},
	journal={Physical Review Research},
	volume={4},
	number={2},
	pages={023017},
	year={2022},
	publisher={APS}
}

@article{koczor2022quantum2,
	title={Quantum natural gradient generalized to noisy and nonunitary circuits},
	author={Koczor, B{\'a}lint and Benjamin, Simon C},
	journal={Physical Review A},
	volume={106},
	number={6},
	pages={062416},
	year={2022},
	publisher={APS}
}

@article{boyd2022training,
  title={Training variational quantum circuits with CoVaR: Covariance root finding with classical shadows},
  author={Boyd, Gregory and Koczor, B{\'a}lint},
  journal={Physical Review X},
  volume={12},
  number={4},
  pages={041022},
  year={2022},
  publisher={APS}
}

@article{feniou2025greedy,
  title={Greedy gradient-free adaptive variational quantum algorithms on a noisy intermediate scale quantum computer},
  author={Feniou, C{\'e}sar and Hassan, Muhammad and Claudon, Baptiste and Courtat, Axel and Adjoua, Olivier and Maday, Yvon and Piquemal, Jean-Philip},
  journal={Scientific Reports},
  volume={15},
  number={1},
  pages={18689},
  year={2025},
  publisher={Nature Publishing Group UK London}
}

@article{nakanishi2020sequential,
  title={Sequential minimal optimization for quantum-classical hybrid algorithms},
  author={Nakanishi, Ken M and Fujii, Keisuke and Todo, Synge},
  journal={Physical Review Research},
  volume={2},
  number={4},
  pages={043158},
  year={2020},
  publisher={APS}
}

@article{motta2020determining,
  title={Determining eigenstates and thermal states on a quantum computer using quantum imaginary time evolution},
  author={Motta, Mario and Sun, Chong and Tan, Adrian TK and O’Rourke, Matthew J and Ye, Erika and Minnich, Austin J and Brandao, Fernando GSL and Chan, Garnet Kin-Lic},
  journal={Nature Physics},
  volume={16},
  number={2},
  pages={205--210},
  year={2020},
  publisher={Nature Publishing Group UK London}
}

@article{puig2026warm,
  title={Warm Starts, Cold States: Exploiting Adiabaticity for Variational Ground-States},
  author={Puig, Ricard and Casas, Berta and Cervera-Lierta, Alba and Holmes, Zo{\"e} and P{\'e}rez-Salinas, Adri{\'a}n},
  journal={arXiv preprint arXiv:2602.06137},
  year={2026}
}

\onecolumngrid
\appendix
\crefalias{section}{appendix}

\section{Related Work \label{appendix:related_works}}
Iterative state-preparation methods such as quantum imaginary time evolution (QITE) \cite{motta2020determining} also construct circuits step by step, but they aim to approximate a global imaginary-time flow. To this end, QITE approximates local non-unitary operations with unitary circuits obtained from increasingly non-local optimizations, making the direct integration of measurement-efficient frameworks such as classical shadows challenging. In contrast, SEGQE performs greedy local optimization based on instantaneous energy reduction and maintains locality throughout the evolution. 

Similarly, the greedy, gradient-free, adaptive variational quantum eigensolver proposed in Ref.~\cite{feniou2025greedy} appends gates, specifically those admitting analytical optimization such as Pauli rotations, based on instantaneous energy reduction, but relies on conventional measurement strategies. SEGQE extends this framework by incorporating classical shadows, providing rigorous per-iteration sample-complexity guarantees, and accommodating large, collections of expressive candidate gates.

\section{Proofs}
\label{appendix:proofs}
Throughout this section, as well as in the results section of the main text, the term classical shadow refers to an estimator obtained from a single uniformly random Pauli measurement followed by the corresponding inverse measurement channel \cite{huang2020predicting}. For any observable $O$, we denote by $o := \Tr(\rho O)$ its expectation value in a given quantum state $\rho$ (or $\ket{\psi}$), and by $\hat{o} := \Tr(O\hat{\rho})$ the corresponding single-shot estimator obtained from a classical shadow $\hat{\rho}$. Unless stated otherwise, expectation values $\mathbb{E}[\cdot]$ are taken with respect to the distribution of such single-shot Pauli shadows of a fixed underlying state $\rho$. Furthermore, we denote by $\mathcal{P}^{n}$ the $n$-qubit Pauli group and by $\mathrm{Cl}(2^n)$ the $n$-qubit Clifford group. 

\begin{lemma}[Second Moments of Pauli Shadows]
    \label{lemma:second_moments}
    Let $\hat{\rho}$ be a classical shadow of an $n$-qubit quantum state $\rho$ obtained from a single uniformly random local Pauli measurement. For $P_\alpha, P_\beta \in \mathcal{P}^n$, define $\hat{p}_\alpha=\mathrm{Tr}(P_\alpha\hat{\rho})$ and $\hat{p}_\beta=\mathrm{Tr}(P_\beta\hat{\rho})$. Then
    \begin{equation}
        \mathbb{E}\left[\hat{p}_\alpha \hat{p}_\beta
        \right]=\begin{cases}
        3^L\mathrm{Tr}(\rho P_\alpha P_\beta) & \text{if $P_\alpha$ and $P_\beta$ commute on every qubit}\\
        0 &\text{otherwise}
        \end{cases},
    \end{equation}
    where $L$ is the number of qubits on which both $P_\alpha$ and $P_\beta$ act nontrivially.
\end{lemma}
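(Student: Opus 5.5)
The plan is to use the tensor-product structure of local Pauli shadows. A single snapshot factorizes as $\hat\rho=\bigotimes_{q=1}^n \hat\rho_q$, with $\hat\rho_q = 3\,V_q^\dagger\ket{b_q}\bra{b_q}V_q - I$. Here $V_q$ is drawn uniformly from $\{I,H,S^\dagger H\}$, and $b_q$ is the outcome, distributed according to $\rho$ measured in the rotated basis. Write $P_\alpha=\bigotimes_q \sigma^{(q)}_\alpha$ and $P_\beta=\bigotimes_q\sigma^{(q)}_\beta$. Then $\hat p_\alpha\hat p_\beta=\prod_q \mathrm{tr}(\sigma^{(q)}_\alpha\hat\rho_q)\,\mathrm{tr}(\sigma^{(q)}_\beta\hat\rho_q)$. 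Conditioned on the chosen basis $W\in\{X,Y,Z\}^n$, every single-qubit factor is a function of the outcome bit $b_q$ only.

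\textbf{Single-qubit factors.} For $\sigma\in\{I,X,Y,Z\}$ measured in basis $W_q$, one has $\mathrm{tr}(\sigma\hat\rho_q)=1$ if $\sigma=I$. It equals $3(-1)^{b_q}$ if $\sigma=W_q$, and $0$ otherwise. This uses $\mathrm{tr}(\sigma V^\dagger\ket b\bra b V)=(-1)^b\delta_{\sigma,W_q}$ for $\sigma\ne I$.

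\textbf{Case analysis.} Suppose some qubit $q$ has $\sigma^{(q)}_\alpha,\sigma^{(q)}_\beta$ distinct and both nontrivial, i.e.\ they anticommute on that qubit. Then the product of the two factors at $q$ vanishes for every choice of $W_q$, since $W_q$ cannot equal both. Hence $\mathbb E[\hat p_\alpha\hat p_\beta]=0$, which gives the ``otherwise'' branch.

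In the remaining case, $P_\alpha$ and $P_\beta$ commute qubitwise. Each qubit is then of one of four types:
\begin{itemize}
\item both trivial, giving factor $1$;
\item exactly one nontrivial, with Pauli $\sigma$, giving $3(-1)^{b_q}[W_q=\sigma]$;
\item both equal to the same nontrivial $\sigma$, giving $9(-1)^{2b_q}[W_q=\sigma]=9[W_q=\sigma]$;
\end{itemize}
Let $A$ be the set of qubits where at least one of them is nontrivial. Call it the joint support, and note that $P_\alpha P_\beta$ restricted to $A$ is the Pauli $\tau$ given by the nontrivial single-qubit entry on each qubit. The product is nonzero only when $W_q=\tau_q$ for all $q\in A$. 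This event has probability $3^{-|A|}$ over the uniform basis choice.

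On that event, the product equals
\begin{equation}
3^{|A|}\,3^{L}\prod_{q\in A\setminus L\text{-set}}(-1)^{b_q},
\end{equation}
where the $L$ qubits with a doubled Pauli contribute $9=3\cdot 3$ and no sign. The conditional expectation of $\prod_{q\in A\setminus L}(-1)^{b_q}$, measured in basis $\tau$, is $\mathrm{tr}(\rho\,\tau')$. Here $\tau'$ is $\tau$ with identity put on the doubled qubits, and this is exactly $P_\alpha P_\beta$, because $\sigma^2=I$ on the overlap. Multiplying gives
\begin{equation}
3^{-|A|}\cdot 3^{|A|}3^L\,\mathrm{Tr}(\rho P_\alpha P_\beta)=3^L\,\mathrm{Tr}(\rho P_\alpha P_\beta),
\end{equation}
as claimed.

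\textbf{Main obstacle.} The only delicate step is the bookkeeping that identifies the conditional sign expectation with $\mathrm{Tr}(\rho P_\alpha P_\beta)$. The argument relies on Born-rule outcome statistics in the product basis $\tau$ reproducing expectations of every Pauli diagonal in that basis. It also needs care with phases: $P_\alpha P_\beta$ is Hermitian with no $\pm i$ factor precisely because of qubitwise commutation. I would state these two facts explicitly as a short sub-claim before combining the qubitwise factors.
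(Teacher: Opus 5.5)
Your proposal is correct and is essentially the paper's argument: both factorize the local Pauli snapshot qubit by qubit and run the same case analysis (both trivial, exactly one nontrivial, equal nontrivial, anticommuting) to obtain $3^L\,\mathrm{Tr}(\rho P_\alpha P_\beta)$ or $0$. The only difference is bookkeeping. The paper averages jointly over basis and outcome as a sum over the six single-qubit stabilizer states, so each qubit contributes an operator ($3I$, $I$, the nontrivial Pauli, or $0$) and the trace against $\rho$ appears directly. You instead condition on the basis, pick up the matching probability $3^{-|A|}$, and use the Born rule for the conditional sign expectation.
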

\begin{proof}
    Let $w_\alpha$, respectively $w_\beta$, denote the localities of the two Pauli operators $P_\alpha,P_\beta\in\mathcal{P}^n$, and let $P_\alpha = \bigotimes^n_{j=1}P_\alpha^j$ and $P_\beta = \bigotimes^n_{j=1}P_\beta^j$, where $P^j_\alpha,P^j_\beta\in\mathcal{P}$. Furthermore, let $\ket{b}\in\{\ket{0},\ket{1}\}^{\otimes n}$ denote the measurement outcome and $U\in\{I,H,S^\dagger H\}^{\otimes n}\subset \mathrm{Cl}(2)^{\otimes n}$ the measurement basis associated with the classical shadow $\hat{\rho}$. Moreover we define $\mathcal{B}=\{\ket{0},\ket{1},\ket{\pm}, \ket{\pm i}\}$ as the set of all six single-qubit Pauli eigenstates. In \cite{huang2020predicting} it was shown that for any $k$-local Pauli operator $P$, we can write $\hat{p}:=\mathrm{Tr}(P\hat{\rho})=3^k\bra{b}U P U^\dagger\ket{b}$. Taking the expectation value of the product $\hat{p}_\alpha \hat{p}_\beta$ with respect to the measurement basis and the measurement outcome then yields
    \begin{align}
        \mathbb{E}\left[\hat{p}_\alpha \hat{p}_\beta\right]&=\frac{3^{w_\alpha + w_\beta}}{3^n}\sum_{\ket{x}\in \mathcal{B}^{\otimes n}}\bra{x}\rho\ket{x}\bra{x}P_\alpha\ket{x}\bra{x}P_\beta\ket{x} \\
        &=\frac{3^{w_\alpha + w_\beta}}{3^n}\mathrm{Tr}\left[\rho\bigotimes^n_{j=1}\left(\sum_{\ket{x^j}\in \mathcal{B}}\ket{x^j}\bra{x^j}\bra{x^j}P_\alpha^j\ket{x^j}\bra{x^j}P^j_\beta\ket{x^j} \right) \right]  \\
        &=\frac{3^{w_\alpha + w_\beta}}{3^n}\mathrm{Tr}\left[\rho\bigotimes^n_{j=1}\left(\begin{cases}
            I & \text{if } P^j_\alpha=P^j_\beta\neq I \\
        3I & \text{if } P^j_\alpha=P^j_\beta= I \\
        P^j_\alpha & \text{if } P^j_\alpha\neq I\text{ and }P^j_\beta= I \\
        P^j_\beta & \text{if } P^j_\beta\neq I\text{ and }P^j_\alpha= I \\
        0 & \text{otherwise }\Leftrightarrow [P^j_\alpha,P^j_\beta]\neq0
        \end{cases} \right) \right]\\
        &=\begin{cases}
            3^L\mathrm{Tr}\left(\rho P_\alpha P_\beta\right)& \text{if }[P^j_\alpha, P^j_\beta]=0\quad\forall1\leq j\leq n\\
        0 &\text{otherwise}
        \end{cases},
    \end{align}
    where $L$ is the number of qubits on which both $P_\alpha$ and $P_\beta$ act nontrivially.
\end{proof}

Since classical shadows define unbiased estimators, the covariance between two Pauli estimators follows directly from \cref{lemma:second_moments} and is stated in \cref{cor:covariances}.

\begin{corollary}[Covariances of Pauli Shadows]
    \label{cor:covariances}
    Under the assumptions of \cref{lemma:second_moments}, it follows
    \begin{equation}
        \mathrm{Cov}\left(\hat{p}_\alpha, \hat{p}_\beta
        \right)=\begin{cases}
        3^L\mathrm{Tr}(\rho P_\alpha P_\beta)-\mathrm{Tr}(\rho P_\alpha)\mathrm{Tr}(\rho P_\beta) & \text{if $P_\alpha$ and $P_\beta$ commute on every qubit}\\
        -\mathrm{Tr}(\rho P_\alpha)\mathrm{Tr}(\rho P_\beta) &\text{otherwise}
        \end{cases}.
    \end{equation}
\end{corollary}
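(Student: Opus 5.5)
The plan is to derive the statement directly from \cref{lemma:second_moments}, using only the definition $\mathrm{Cov}(\hat{p}_\alpha,\hat{p}_\beta)=\mathbb{E}[\hat{p}_\alpha\hat{p}_\beta]-\mathbb{E}[\hat{p}_\alpha]\,\mathbb{E}[\hat{p}_\beta]$. Everything then reduces to two ingredients: unbiasedness of the single-shot Pauli estimators, and the case split already established for the second moment.

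\textbf{Step 1: unbiasedness.} First I will confirm that $\mathbb{E}[\hat{p}_\alpha]=\mathrm{Tr}(\rho P_\alpha)$. This is the standard property of classical shadows from \cite{huang2020predicting}: the inverse measurement channel makes $\mathbb{E}[\hat{\rho}]=\rho$, and linearity of the trace gives the claim.

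To keep things self-contained, I can also read it off \cref{lemma:second_moments} by taking $P_\beta=I$. Then $\hat{p}_\beta=\mathrm{Tr}(\hat{\rho})=1$ for every snapshot. The identity commutes with $P_\alpha$ on every qubit, and the overlap of supports is $L=0$. The lemma therefore yields $\mathbb{E}[\hat{p}_\alpha]=\mathrm{Tr}(\rho P_\alpha)$ directly.

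\textbf{Step 2: substitute the second moment.} Next I insert the two cases of \cref{lemma:second_moments} into the covariance formula.

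If $P_\alpha$ and $P_\beta$ commute on every qubit, the second moment is $3^L\mathrm{Tr}(\rho P_\alpha P_\beta)$. Subtracting the product of means gives $3^L\mathrm{Tr}(\rho P_\alpha P_\beta)-\mathrm{Tr}(\rho P_\alpha)\mathrm{Tr}(\rho P_\beta)$.

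Otherwise the second moment vanishes, leaving only $-\mathrm{Tr}(\rho P_\alpha)\mathrm{Tr}(\rho P_\beta)$.

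There is no genuine obstacle. The only point to check is that the identity case $P_\beta=I$ is covered by the lemma's proof: on each qubit where $P_\beta^j=I$, the single-qubit factor is $P_\alpha^j$ or $3I$, and the prefactors $3^{w_\alpha}/3^n$ combine consistently. This is a quick bookkeeping check. Alternatively, I can simply cite unbiasedness from \cite{huang2020predicting}.
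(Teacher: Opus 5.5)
Your proposal is correct and matches the paper's own one-line argument: the covariance is $\mathbb{E}[\hat{p}_\alpha\hat{p}_\beta]-\mathbb{E}[\hat{p}_\alpha]\,\mathbb{E}[\hat{p}_\beta]$, the second moment comes from \cref{lemma:second_moments}, and the means come from unbiasedness of classical shadows. Your optional self-contained derivation of unbiasedness also checks out: set $P_\beta=I$ in the lemma, note $\mathrm{Tr}(\hat{\rho})=1$ for every snapshot and $L=0$. The paper instead cites unbiasedness directly from the original classical-shadows work.
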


For the special case $P_\alpha = P_\beta$, \cref{cor:covariances} reduces to the known single-shot variance of Pauli shadow estimators derived in \cite{huang2020predicting}. 

These results can now be used to derive upper bounds on the variance of linear functions of Pauli estimators. To define the single-shot estimator for the energy differences considered by SEGQE, we briefly recall how SEGQE constructs energy-difference estimators (see \cref{sec:procedure} for full details). Given an $n$-qubit Hamiltonian $H=\sum^r_{i=1}c_iP_i$ and an $n$-qubit unitary of locality $m$, only Hamiltonian terms whose support overlaps nontrivially with that of $U$ contribute to the energy difference. For each such term, the conjugated operator $U^\dagger P_i U$ is expanded in the Pauli basis on the support of $U$, yielding a linear representation of the energy difference in terms of Pauli expectation values. Replacing each Pauli expectation value by its corresponding single-shot Pauli shadow estimator defines a linear single-shot estimator $\hat{\Delta E}$ for the energy difference. Using this construction, the following lemma provides an upper bound on the variance of these single-shot energy difference estimators.

\begin{lemma}[Variance of Energy-Difference Estimators]
    \label{lemma:variance_energy}
    Consider an $n$-qubit Hamiltonian $H=\sum^r_{i=1}c_iP_i$, where each $P_i$ is a Pauli operator of locality at most $l$, an $n$-qubit quantum state $\ket{\psi}$, and an arbitrary $n$-qubit gate $U$ of locality $m$. Let $M^u$ denote the number of Hamiltonian terms whose support overlaps nontrivially with the support of $U$, and define $c_\mathrm{max}:=\max_{1\leq i\leq r}|c_i|$. Then the variance of the single-shot estimator $\hat{\Delta E}$ of the energy difference
    \begin{equation}
        \Delta E= \bra{\psi}H\ket{\psi}-\bra{\psi}U^\dagger H U \ket{\psi},
    \end{equation}
    with respect to a single classical shadow, is bounded as
    \begin{equation}
        \mathrm{Var}\left[\hat{\Delta E}\right]\leq 4^{m+1}3^{l-1}\left(M^uc_\mathrm{max}\right)^2.
    \end{equation}
\end{lemma}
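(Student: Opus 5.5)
The plan is to bound the second moment $\mathbb{E}[\hat{\Delta E}^2]\ge\mathrm{Var}[\hat{\Delta E}]$. Split the estimator term by term over the Hamiltonian terms that overlap the gate. For $i\in I^u$ define the observable
\begin{equation*}
X_i := c_i\left(P_i - U^\dagger P_i U\right) = c_i\, O_i\otimes P_i^c, \qquad O_i := P_i^u - U^\dagger P_i^u U .
\end{equation*}
Here $O_i$ acts on the $m$ qubits of $Q^u$, and $P_i^c$ is a Pauli operator on $Q^c$. The single-shot estimator obtained from the Pauli expansion of \cref{sec:procedure} is then $\hat{\Delta E}=\sum_{i\in I^u}\hat{x}_i$ with $\hat{x}_i=\mathrm{Tr}(X_i\hat\rho)$. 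This holds because $\hat\rho\mapsto\mathrm{Tr}(\cdot\,\hat\rho)$ is linear, so expanding $O_i$ in Paulis and replacing each Pauli by its shadow estimate is the same as estimating $X_i$ directly.

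By Minkowski's inequality in $L^2$,
\begin{equation*}
\mathbb{E}\bigl[\hat{\Delta E}^2\bigr]\le\Bigl(\sum_{i\in I^u}\sqrt{\mathbb{E}[\hat x_i^2]}\Bigr)^2\le (M^u)^2\max_{i\in I^u}\mathbb{E}[\hat x_i^2].
\end{equation*}
It therefore suffices to show $\mathbb{E}[\hat x_i^2]\le 4^{m+1}3^{l-1}c_{\max}^2$ for every $i\in I^u$.

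For a single term, I would avoid the Pauli-by-Pauli expansion via \cref{lemma:second_moments}. Bounding $\sum_{\mu,\nu}|g_\mu g_\nu|3^{L_{\mu\nu}}$ that way costs an extra $3^m$. Instead I use the structure of the random Pauli shadow, whose inverse channel and measurement both factorize over qubits. As in Ref.~\cite{huang2020predicting}, write
\begin{equation*}
\mathbb{E}[\hat o^2]=\mathrm{Tr}\bigl(\rho\,\mathcal{D}(O)\bigr),\qquad \mathcal{D}(O)=\mathbb{E}_V\sum_b V^\dagger|b\rangle\langle b|V\,\langle b|V\mathcal{M}^{-1}(O)V^\dagger|b\rangle^2 .
\end{equation*}
For a product observable $O=A\otimes B$ across the bipartition $Q^u\cup Q^c$, the random basis and the single-qubit inverse channels factorize. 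Hence $\mathcal{D}(A\otimes B)=\mathcal{D}_u(A)\otimes\mathcal{D}_c(B)$, even though $\rho$ itself need not be a product state.

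For the Pauli factor on $Q^c$, the per-qubit computation already done in the proof of \cref{lemma:second_moments} gives $\mathcal{D}_c(P_i^c)=3^{w_i}I$, where $w_i$ is the locality of $P_i^c$. Since $i\in I^u$, the support $Q^i$ meets $Q^u$ in at least one qubit, so $w_i\le l-1$.

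For the factor on $Q^u$, I invoke the shadow-norm bound for $k$-local observables from Ref.~\cite{huang2020predicting}: $\|\mathcal{D}_u(A)\|_\infty\le 4^{k}\|A\|_\infty^2$ for traceless $A$ on $k$ qubits. Here $O_i$ is traceless, since $\mathrm{tr}\,P_i^u=\mathrm{tr}\,U^\dagger P_i^uU$. If $P_i^u$ is not the identity, this trace is zero. If $P_i^u=I$, then $O_i=0$, but that case cannot occur for $i\in I^u$. Moreover $\|O_i\|_\infty\le 2$, so $\|\mathcal{D}_u(O_i)\|_\infty\le 4^m\cdot 4=4^{m+1}$.

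Combining the two factors gives
\begin{equation*}
\mathbb{E}[\hat x_i^2]=c_i^2\,\mathrm{Tr}\bigl(\rho\,\mathcal{D}_u(O_i)\otimes 3^{w_i}I\bigr)\le c_{\max}^2\,4^{m+1}3^{l-1},
\end{equation*}
and hence $\mathrm{Var}[\hat{\Delta E}]\le 4^{m+1}3^{l-1}(M^uc_{\max})^2$.

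The main obstacle is the factorization step. I need to check carefully that the shadow estimator of $O_i\otimes P_i^c$ coincides with the linear combination of Pauli shadow estimators used by SEGQE. I also need to check that $\mathcal{D}$ is genuinely multiplicative for product observables. Both reduce to the fact that $V=\bigotimes_j V_j$ and $\mathcal{M}^{-1}=\bigotimes_j\mathcal{M}_j^{-1}$, so the summand $\langle b|V\mathcal{M}^{-1}(O)V^\dagger|b\rangle^2$ splits as a product over $Q^u$ and $Q^c$.

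The other point to verify is that the cited $4^{k}\|A\|_\infty^2$ bound is available in operator form, $\mathcal{D}_u(A)\preceq 4^k\|A\|_\infty^2 I$. This is needed so that it can be tensored with $3^{w_i}I$ and traced against an entangled $\rho$. If only the scalar shadow-norm statement is available, I would redo Huang et al.'s argument on $Q^u$. That argument expands $A$ in Paulis, uses that $\mathcal{D}_u$ is diagonal-compatible per qubit, and bounds the result by $\max_\sigma\mathrm{Tr}(\sigma\,\cdot)$ over states on $Q^u$ tensored with an arbitrary $Q^c$ part; the positivity of $\mathcal{D}_c(P_i^c)=3^{w_i}I$ makes this straightforward.
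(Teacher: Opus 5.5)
Your proof is correct and reaches the same bound by a different route from the paper's.

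\textbf{The paper's route.} It never splits the estimator by Hamiltonian term. It writes the second moment as a single quadratic form $f^\top K f$ over the composite index $(i,\mu)\in I^u\times\{0,\dots,4^m-1\}$ and fills in $K$ from \cref{lemma:second_moments}. It then factorizes $K=K^c\otimes K^u$ with $K^u=\tilde K^{\otimes m}$, and reads off $\lambda_{\max}(K^u)=4^m$ from the spectrum $\{4,3,3,0\}$ of the single-qubit matrix $\tilde K$. It bounds $\lambda_{\max}(K^c)\le M^u 3^{l-1}$ by the all-entries-equal worst case, and uses $\|f\|_2^2\le 4\sum_{i\in I^u}c_i^2$. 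One factor of $M^u$ thus comes from the spectrum of $K^c$, and the other from $\sum_i c_i^2\le M^u c_{\max}^2$.

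\textbf{Your route.} Minkowski absorbs all cross-correlations between different Hamiltonian terms and produces $(M^u)^2$ in one step. You get $4^{m+1}3^{l-1}$ per term from the multiplicativity of $\mathcal{D}$ across $Q^u\cup Q^c$ together with the local shadow-norm bound of Ref.~\cite{huang2020predicting}.

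\textbf{What each buys.} Your argument is more modular, and it works at the operator level, so signs never cause trouble. In the paper, replacing $\mathbb{E}[\hat p_\alpha\hat p_\beta]$ by $K_{\alpha\beta}$ against signed products $f_\alpha f_\beta$ is strictly valid only after passing to $|f|$. That fix is harmless, since $K$ is entrywise nonnegative. Your argument also gives the slightly sharper intermediate bound $4^{m+1}3^{l-1}\bigl(\sum_{i\in I^u}|c_i|\bigr)^2$. The paper's argument relies only on \cref{lemma:second_moments} and keeps the full cross-correlation matrix. That matrix is exactly what is exploited for the instance-specific constants in \cref{app:tfi}, and Minkowski discards it.

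\textbf{Two minor points.} First, a Pauli-by-Pauli expansion does not inherently cost an extra $3^m$; the spectral treatment of $\tilde K^{\otimes m}$ is lossless. Second, your worry about the operator form is unnecessary. The shadow norm is $\max_\sigma\mathrm{Tr}(\sigma\,\mathcal{D}(O))$ over all states, and $\mathcal{D}(O)\succeq 0$. So the scalar bound already says $\mathcal{D}_u(A)\preceq 4^k\|A\|_\infty^2 I$.
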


\begin{proof}
    Let $Q^i\subseteq\{1,\ldots n\}$ denote the support of the Hamiltonian term $P_i$ and let $Q^u\subseteq\{1,\ldots n\}$ denote the support of the unitary $U$. $Q^c$ denotes the complement of $Q^u$. Only Hamiltonian terms with $Q^i\cap Q^u\neq\emptyset$ contribute to the energy difference, and we denote their indices by $I^u:=\{i:Q^i\cap Q^u\neq\emptyset\}$. Writing $P_i=P_i^u\otimes P_i^c$ with respect to the bipartition $Q^u\cup Q^c$, the energy difference induced by $U$ can be expressed as
    \begin{align}
        \Delta E&= \sum_{i\in I^u}c_i\left(\bra{\psi}P^u_i\otimes P_i^c\ket{\psi}-\bra{\psi}U^\dagger P^u_i U\otimes P_i^c\ket{\psi} \right).
    \end{align}
    Expanding the conjugated operator $U^\dagger P_i^u U$ in the Pauli basis on the $m$ qubits in $Q^u$ yields
    \begin{equation}
        U^\dagger P^u_i U=\sum_{j=0}^{4^m-1}r_{ij}P_j^u,\qquad r_{ij}:=\frac{1}{2^m}\mathrm{Tr}\left[U^\dagger P^u_iUP_j^u\right],
    \end{equation}
    and therefore
    \begin{equation}
        \label{eq:energy_difference_app}
        \Delta E=\sum_{i\in I^u}\sum^{4^m-1}_{j=0}f_{ij}p_{ij},
    \end{equation}
    where $f_{ij}:= c_i(\delta_{P^u_iP_j^u} - r_{ij})$ and $p_{ij} := \bra{\psi} P_j^u \otimes P_i^c \ket{\psi}$. This is exactly the linear representation of the energy differences used by SEGQE (see \cref{sec:procedure} for details). For notational convenience, we fold the double index into $\alpha=(i,j)$ and denote by $\mathcal{F}^u:=I^u\times \{0,...,4^m-1\}$ the set of all valid indices. This allows us to write $P_\alpha=P^u_j\otimes P^c_i$ and $\Delta E=\sum_{\alpha\in\mathcal{F}^u} f_\alpha p_\alpha$.
    
    Now, let $\hat{p}_{\alpha} := \mathrm{Tr}(\hat\rho P_\alpha)$ denote the estimator of $p_\alpha$ obtained from a single classical shadow $\hat{\rho}$ of $\ket{\psi}\bra{\psi}$. These single-shot Pauli estimators can be used to construct an estimator $\hat{\Delta E}$ of $\Delta E$ via \cref{eq:energy_difference_app}. The variance of this estimator is bounded by its second moment:
    \begin{equation}
        \mathrm{Var}[\hat{\Delta E}]=\mathbb{E}\left[\hat{\Delta E}^2\right]-\mathbb{E}\left[\hat{\Delta E}\right]^2\leq \mathbb{E}\left[\hat{\Delta E}^2\right]=\sum_{\alpha, \beta\in\mathcal{F}^u}f_\alpha\mathbb{E}\left[\hat{p}_\alpha \hat{p}_\beta\right]f_\beta.
    \end{equation}
    Using \cref{lemma:second_moments} and the fact that $\lvert\mathrm{Tr}(\rho P_\alpha P_\beta)\rvert\leq 1$ for all Pauli operators $P_\alpha$ and $P_\beta$, it follows that
    \begin{equation}
        \label{eq:app:2}
        \mathrm{Var}[\hat{\Delta E}]\leq \sum_{\alpha, \beta\in\mathcal{F}^u}f_\alpha K_{\alpha \beta}f_\beta,\qquad K_{\alpha \beta}=:\begin{cases}
        3^L&\text{if $P_\alpha$ and $P_\beta$ commute qubit-wise}\\
        0&\text{otherwise}
    \end{cases},
    \end{equation}
    where $L$ denotes the number of qubits on which both $P_\alpha$ and $P_\beta$ act nontrivially.
    \Cref{eq:app:2} can be written in matrix form as the quadratic expression $f^\top Kf$, where both, the vector $f$ and the correlation matrix $K$ are restricted to indices $\alpha\in \mathcal{F}^u$. Since $K$ is symmetric, we have
    \begin{equation}
        f^\top Kf\leq \lambda_\mathrm{max}(K) \norm{f}_2^2,
    \end{equation}
    where $\lambda_\mathrm{max}(K)$ denotes the largest eigenvalue of $K$ and $\norm{f}^2_2$ the squared $\ell_2$-norm of $f$, which is given by
\begin{align}
    \norm{f}^2_2 &= \sum_{\alpha\in \mathcal{F}^u} f_\alpha^2=\sum_{i\in I^u}c_i^2\sum^{4^m-1}_{j=0}\left(\delta_{P^u_iP^u_j}-r_{ij} \right)^2\\
    &=\sum_{i\in I^u}c_i^2\left(1-2r_{ii}+\sum^{4^m-1}_{j=0}r_{ij}^2\right).
\end{align}
Since $U^\dagger P_i^u U$ is Hermitian and unitary, and the set $\{P_j^u\}^{4^m-1}_{j=0}$ forms an orthonormal basis with respect to the Hilbert-Schmidt inner product, the coefficient vector $r_i:=(r_{ij})_j$ satisfies $\norm{r_i}_2=1$. It therefore follows that
\begin{equation}
    \norm{f}^2_2=2\sum_{i\in I^u}c_i^2\left(1-r_{ii}\right)\leq 4\sum_{i\in I^u}c_i^2.
\end{equation}
To determine $\lambda_{\mathrm{max}}(K)$ of $K$, we exploit the product structure of $\mathcal{F}^u=I^u\times \{0,...,4^m-1\}$. Let $\alpha=(i,j)$ and $\beta=(k,l)$ be two composite indices. Since $Q^u$ and $Q^c$ are disjoint, the entries of $K$ factorize as $K_{\alpha\beta}= K^u_{jl}K^c_{ik}$, where $K^u_{jl}$ accounts for the correlations on $Q^u$ and $K^c_{ik}$ for $Q^c$. Consequently, the correlation matrix decomposes into the tensor product $K=K^c\otimes K^u$. This follows because Pauli operators factorize across disjoint subsystems and classical shadow correlations respect tensor-product structure. The matrix $K^u$ describes the correlations between operators in the set $\{P^u_j\}_{j=0}^{4^m-1}$ on $Q^u$. Since this set corresponds to the full $m$-qubit Pauli basis, $K^u$ takes the form $K^u=\tilde{K}^{\otimes m}$, where 
\begin{equation}
\tilde{K}
=\begin{pmatrix}
    1&1&1&1\\1&3&0&0\\1&0&3&0\\1&0&0&3
\end{pmatrix}
\end{equation}
is the single-qubit correlation matrix in the basis $\{I,X,Y,Z\}$. The eigenvalues of $\tilde{K}$ are $\{4,3,3,0\}$, which implies $\lambda_\mathrm{max}(K)=\lambda^m_\mathrm{max}(\tilde{K})=4^m$. The matrix $K^c$ is defined by the correlations between the Pauli operators $P^c_i$ on $Q^c$. Since every operator $P_i$ for $i\in I^u$ acts non-trivially on at least one qubit in $Q^u$, the weight of $P_i^c$ is bounded by $l-1$. The entries $K^c$ are therefore bounded by $K^c_{ik}\leq 3^{l-1}$. Consequently, the largest eigenvalue $\lambda_\mathrm{max}(K^c)$ is upper-bounded by the case where all Hamiltonian terms are identical on $Q^c$ (i.e., $P^c_i=P^c_k$ for all $i,k\in I^u$), yielding an $M^u\times M^u$ matrix of identical entries $3^{l-1}$, where $M^u:=|I^u|$. It follows that $\lambda_\mathrm{max}(K^c)\leq M^u3^{l-1}$. Combining these results, we obtain $\lambda_\mathrm{max}(K)\leq M^u 4^m 3^{l-1}$ and thus the following final bound on the variance:
\begin{equation}
    \mathrm{Var}\left[\hat{\Delta E}\right]\leq  M^u4^{m+1}3^{l-1}\sum_{i\in I^u}c^2_i\leq 4^{m+1}3^{l-1}(M^uc_\mathrm{max})^2,
\end{equation}
where $c_\mathrm{max}:=\max_{1\leq i\leq r}|c_i|$.
\end{proof}
In addition to the variance bound provided by \cref{lemma:variance_energy}, proving \cref{theorem1} and \cref{theorem2} requires a bound on the magnitude of the single-shot estimator $\hat{\Delta E}$. This is given by the following lemma.
\begin{lemma}[Single-Shot Bound for Energy-Difference Estimators]
    \label{lemma:singleshot_energy}
    Under the assumptions of Lemma~\ref{lemma:variance_energy}, the single-shot Pauli shadow estimator $\hat{\Delta E}$ satisfies the almost-sure bound
    \begin{equation}
        \left|\hat{\Delta E}\right|\leq2 M^u c_{\max} \, 4^{m} 3^{\,l-1}.
    \end{equation}
\end{lemma}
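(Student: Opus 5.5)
We bound $\hat{\Delta E}=\sum_{\alpha\in\mathcal{F}^u} f_\alpha \hat p_\alpha$ directly from the structure of single-shot Pauli shadows, reusing the linear representation from the proof of \cref{lemma:variance_energy}. The route is to regroup the sum over $\alpha=(i,j)$ by Hamiltonian term $i\in I^u$, so that
\begin{equation*}
\hat{\Delta E}=\sum_{i\in I^u} c_i\,\Tr\!\left[\hat\rho\,\left(P_i^u - U^\dagger P_i^u U\right)\otimes P_i^c\right].
\end{equation*}
The triangle inequality then gives $|\hat{\Delta E}|\le M^u c_{\max}\max_{i}|\Tr[\hat\rho\, O_i]|$, where $O_i:=(P_i^u-U^\dagger P_i^u U)\otimes P_i^c$. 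It therefore suffices to show $|\Tr[\hat\rho\, O_i]|\le 2\cdot 4^m 3^{l-1}$ for each $i$.

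\textbf{Step 1 (factorize the snapshot).} The Pauli shadow is a product operator, $\hat\rho=\bigotimes_{q=1}^n \hat\rho_q$ with $\hat\rho_q = 3U_q^\dagger\ket{b_q}\bra{b_q}U_q - I$. Hence
\begin{equation*}
\Tr[\hat\rho\, O_i]=\Tr\!\left[\hat\rho_{Q^u}\,(P_i^u-U^\dagger P_i^uU)\right]\cdot \Tr\!\left[\hat\rho_{Q^c}\,P_i^c\right].
\end{equation*}

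\textbf{Step 2 (bound the $Q^c$ factor).} The factor on $Q^c$ equals $3^{w}\bra{b}\cdots\ket{b}$, using the formula from \cite{huang2020predicting} recalled in the proof of \cref{lemma:second_moments}. Here $w\le l-1$ is the weight of $P_i^c$, because $P_i$ touches $Q^u$. Its magnitude is therefore at most $3^{l-1}$.

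\textbf{Step 3 (bound the $Q^u$ factor).} This is the main step. Let $A_i:=P_i^u-U^\dagger P_i^u U$, which has $\|A_i\|_\infty\le 2$. Expanding $A_i=\sum_j (\delta_{P_i^uP_j^u}-r_{ij})P_j^u$ would give $|\Tr[\hat\rho_{Q^u}A_i]|\le\sum_j|\delta-r_{ij}|\,3^{|j|}$. Controlling that sum via Cauchy--Schwarz and $\|r_i\|_2=1$ yields $\sqrt{4^m\cdot 10^m}$-type factors, which is too crude. Instead I would use the trace norm of the snapshot together with Hölder's inequality:
\begin{equation*}
|\Tr[\hat\rho_{Q^u}A_i]|\le \|\hat\rho_{Q^u}\|_1\,\|A_i\|_\infty .
\end{equation*}
Each single-qubit factor $3\ket{v}\bra{v}-I$ has eigenvalues $2$ and $-1$, so its trace norm is $3$. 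This gives $\|\hat\rho_{Q^u}\|_1=3^m$, and hence $|\Tr[\hat\rho_{Q^u}A_i]|\le 2\cdot 3^m\le 2\cdot 4^m$.

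\textbf{Combining.} Putting the three steps together gives $|\hat{\Delta E}|\le 2M^u c_{\max}3^m3^{l-1}$, which is stronger than the claimed bound $2M^uc_{\max}4^m3^{l-1}$. The stated $4^m$ presumably reflects the authors' cruder coefficient-wise bound, which our result implies.

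The only delicate point is making sure the estimator actually used is the linear combination of $\hat p_\alpha=\Tr(P_\alpha\hat\rho)$ with the same $\hat\rho$ for all $\alpha$. It is, so the regrouping into $\Tr[\hat\rho\, O_i]$ is legitimate.
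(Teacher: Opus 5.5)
Your proof is correct, and it takes a genuinely different route from the paper's.

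\textbf{The paper's route.} The paper stays inside the Pauli expansion and applies the triangle inequality coefficient by coefficient, using $|f_{ij}|\le 2c_{\max}$ (which follows from $|r_{ij}|\le 1$). For a fixed measurement basis, $\hat p_{ij}$ vanishes unless $P^u_j\otimes P^c_i$ is compatible with it, and then $|\hat p_{ij}|\le 3^{l-1}3^{\mathrm{wt}(P^u_j)}$. Counting the $\binom{m}{k}$ compatible weight-$k$ Paulis on $Q^u$ gives $\sum_k\binom{m}{k}3^k=4^m$.

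So the expansion route you set aside is not too crude for the stated claim. A sup-norm bound on $f$ combined with an exact count of the nonzero $\hat p_{ij}$ lands exactly on $4^m$. Your Cauchy--Schwarz variant, restricted to compatible Paulis, would also suffice, giving $2\cdot 10^{m/2}\le 2\cdot 4^m$.

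\textbf{Your route.} You regroup by Hamiltonian term and use the product structure of the snapshot to split off the $Q^c$ factor. You then bound the $Q^u$ factor by H\"older's inequality, with $\|\hat\rho_{Q^u}\|_1=3^m$ and $\|P^u_i-U^\dagger P^u_iU\|_\infty\le 2$. This never touches the coefficients $r_{ij}$. It yields the sharper bound $2M^uc_{\max}3^{m+l-1}$; the factor $2\cdot 3^m$ is attained, for example, for $m=1$, $U=X$, $P^u_i=Z$ and a $Z$-basis snapshot. It also applies verbatim to the commutator coefficients $\bra{\psi}i[X_j,H]\ket{\psi}$ of \cref{theorem2}, since $\|i[X_j,P^u_i]\|_\infty\le 2$ as well.

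\textbf{What the sharper bound does not buy.} It does not reduce the sample complexity. In the proof of \cref{theorem1}, the single-shot bound enters Bernstein's inequality only through the comparison with the variance bound ($B\le V$). The stated $N$ is therefore governed by the $4^{m+1}$ factor of \cref{lemma:variance_energy}, which your argument leaves untouched. The paper's version, for its part, reuses exactly the same expansion and compatibility bookkeeping as the variance lemma.
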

\begin{proof}
    We use the same linear representation of the energy-difference estimator as in the proof of Lemma~\ref{lemma:variance_energy},
    \begin{equation}
        \hat{\Delta E}=\sum_{i\in I^u}\sum_{j=0}^{4^m-1}f_{ij}\hat{p}_{ij},
    \end{equation}
    where $f_{ij}=c_i(\delta_{P_i^u,P_j^u}-r_{ij})$ and $\hat{p}_{ij}=\mathrm{Tr}\left(\hat{\rho}P^u_j\otimes P^c_i \right)$. By the triangle inequality,
    \begin{equation}
        \label{eq:ss_triangle}
        \left|\hat{\Delta E}\right|\leq\sum_{i\in I^u}\sum_{j=0}^{4^m-1} |f_{ij}|\,|\hat p_{ij}|\leq f_\mathrm{max}\sum_{i\in I^u}\sum_{j=0}^{4^m-1}|\hat p_{ij}|,
    \end{equation}
    where
    \begin{equation}
        f_\mathrm{max}:=\max_{i\in I^u}\max_{0\leq j \leq 4^m-1}\left|f_{ij}\right|\leq 2c_\mathrm{max}.
    \end{equation}
    Let $\hat{\rho}$ be a classical shadow obtained from a single uniformly random local Pauli measurement, and let $P_S=P_S^u\otimes P^c_S$ denote the corresponding measurement Pauli. Then, for every single-shot estimator $\hat{p}_{ij}=\mathrm{Tr}\left(\hat{\rho}P^u_j\otimes P^c_i \right)$ it holds
    \begin{equation}
        \left| \hat{p}_{ij}\right|=\begin{cases}
            3^{\mathrm{wt}(P^u_j)+\mathrm{wt}(P^c_i)}&\text{if $P^u_j\otimes P^c_i$ is compatible with $P_S$}\\
            0&\text{otherwise}
        \end{cases}\leq3^{l-1}\begin{cases}
            3^{\mathrm{wt}(P^u_j)}&\text{if $P^u_j$ is compatible with $P_S^u$}\\
            0&\text{otherwise}
        \end{cases},
    \end{equation}
    where $\mathrm{wt}(P)$ denotes the locality of a given Pauli $P$. Furthermore, for every measurement basis $P_S=P^u_S\otimes P^c_S\in \mathcal{P}^n$, there are exactly $\binom{m}{k}$ $m$-qubit Pauli operators of locality $k$ which are compatible with $P^u_S$. Therefore, we find
    \begin{equation}
        \sum_{i\in I^u}\sum_{j=0}^{4^m-1}|\hat p_{ij}|\leq M^u3^{l-1}\sum^m_{k=0}\binom{m}{k}3^k=M^u3^{l-1}4^m
    \end{equation}
    Putting everything together, we obtain the final bound
    \begin{equation}
        \left|\hat{\Delta E}\right|\leq2 M^u c_{\max} \, 4^{m} 3^{\,l-1}.
    \end{equation}
\end{proof}
We now combine the variance and single-shot bounds to prove \cref{theorem1} and \cref{theorem2}.

\begin{proof}[Proof of \Cref{theorem1}]
    Consider a collection of $N$ independent classical shadows $\{\hat{\rho}^i\}_{i=1}^N$ of $\ket{\psi}$ and define
    \begin{align}
        \hat{\Delta E}^i_j:=\mathrm{Tr}\left(\hat{\rho}^iH\right)-\mathrm{Tr}\left(\hat{\rho}^iU_j^\dagger HU_j\right)=\sum_{\alpha\in \mathcal{F}^j}f_{j, \alpha}\hat{p}^i_\alpha,\qquad X^i_j:=\hat{\Delta E}^i_j-\Delta E_j,
    \end{align}
    where $\hat{p}^i_\alpha=\mathrm{Tr}\left(\hat{\rho}^iP_\alpha\right)$ and $f_{j,\alpha}$, $P_\alpha$, and $\mathcal{F}^j$ are defined as in the proof of \cref{lemma:variance_energy}. Then $X^1_j,\ldots X^N_j$ are independent and satisfy $\mathbb{E}\left[X^i_j \right]=0$. We estimate $\Delta E_j$ using the empirical mean 
    \begin{equation}
        \bar{\Delta E}_j:=\frac{1}{N}\sum_{i=1}^N \hat{\Delta E}_j^i.
    \end{equation}
    Bernstein's inequality yields, for any $\epsilon>0$,
    \begin{equation}
        \mathrm{Pr}\left(\left|\Delta E_j-\bar{\Delta E}_j \right|\geq \epsilon\right)=\mathrm{Pr}\left(\left|\frac{1}{N}\sum_{i=1}^NX_j^i \right|\geq \epsilon \right)\leq 2\exp\left(-\frac{N\epsilon^2}{2\left( V+\frac{B\epsilon }{3}\right)} \right),
    \end{equation}
    provided that $\left|X^i_j\right|\leq B$ almost surely and $\mathbb{E}\left[\left(X^i_j \right)^2\right]=\mathrm{Var}\left[\hat{\Delta E}^i_j\right]\leq V$, for all $1\leq i\leq N$ and $1\leq j \leq K$. 
    By $\cref{lemma:singleshot_energy}$ $\left| \hat{\Delta E}^i_j\right|\leq 2M_jc_\mathrm{max}4^{m_j}3^{l-1}:=B_j/2$, where $m_j$ denotes the locality of the unitary $U_j$. Furthermore, by realizing that $\Delta E_j=\mathbb{E}\left[\hat{\Delta E}^i_j\right]$, it also follows that $\left|\Delta E_j\right|\leq B_j/2$. Therefore, it follows that
    \begin{equation}
        \left| X^i_j\right|\leq B_j\leq  \max_{1\leq j\leq K}B_j=Mc_\mathrm{max}4^{m+1}3^{l-1}=:B.
    \end{equation}
    Moreover, by \cref{lemma:variance_energy},
    \begin{equation}
        \mathrm{Var}\left[\hat{\Delta E}^i_j\right]\leq 4^{m+1}3^{l-1}(Mc_\mathrm{max})^2=:V,
    \end{equation}
    and in particular $V\geq B$. 

    Using a union bound over all $K$ energy differences and restricting to $\epsilon\in (0,1]$, we obtain
    \begin{equation}
        \mathrm{Pr}\left(\max_{1\leq j\leq K}\left|\Delta E_j-\bar{\Delta E}_j \right|\geq \epsilon\right)\leq 2K\exp\left(-\frac{3N\epsilon^2}{8V} \right).
    \end{equation}
    Then for any $\delta \in (0,1]$ choosing $N$ such that 
    \begin{equation}
        \delta\leq2K\exp\left(-\frac{3N\epsilon^2}{8V} \right)\Leftrightarrow N\geq \frac{8\log(2K/\delta)}{3\epsilon^2}V=\frac{32\log(2K/\delta)}{9\epsilon^2}4^m3^l(Mc_\mathrm{max})^2,
    \end{equation}
    is sufficient to ensure
    \begin{equation}
        \mathrm{Pr}\left(\max_{1\leq j\leq K}\left|\Delta E_j-\bar{\Delta E}_j \right|\geq \epsilon\right)\leq\delta.
    \end{equation}
\end{proof}
\begin{proof}[Proof of \Cref{theorem2}]
    We can decompose the parameter-dependent energy difference $\Delta E_j(\theta)$ into the trigonometric form
    \begin{equation}
        \Delta E_j(\theta)=\frac{1}{2}\left(A_j-A_j\cos(\theta)-B_j\sin(\theta)\right), 
    \end{equation}
    where $A_j:=\bra{\psi}H\ket{\psi}-\bra{\psi}X_jHX_j\ket{\psi}$ and $B_j:=\bra{\psi}i[X_j,H]\ket{\psi}$. This allows us to determine the global maximizer $\theta^*_j$ and the corresponding global maximum $\Delta E_{j,\mathrm{max}}$ analytically as
    \begin{equation}
        \theta^*_j =\arctan2(B_j, A_j),\qquad\Delta E_{j, \mathrm{max}}=\frac{1}{2}\left( A_j+\sqrt{A_j^2+B_j^2}\right).
    \end{equation}
    Since $X_j$ is Hermitian ($X_j=X_j^\dagger$) and satisfies $X_j^2=I$, it follows that $X_j$ is unitary and has locality $m_j\leq m$. Therefore, $A_j$ is of the same form as the energy differences considered in \cref{lemma:variance_energy} and \cref{lemma:singleshot_energy}. Hence, using $m_j\leq m$ and $M_j\leq M$, we obtain the uniform bounds
    \begin{equation}
        \mathrm{Var}[\hat{A}_j]\leq 4^{m+1}3^{l-1}(Mc_\mathrm{max})^2,\qquad \left| \hat{A}_j \right|\leq 2Mc_\mathrm{max}4^m3^{l-1},
    \end{equation}
    where $\hat{A}_j$ denotes the single-shot shadow estimator of $A_j$. $B_j=\bra{\psi}i[X_j,H]\ket{\psi}$ is of slightly different form. However, since $X_j$ acts nontrivially only on $m_j$ qubits, the commutator $i[X_j, H]$ can be expanded in the Pauli basis on the support of $X_j$ yielding a linear representation $B_j=\sum_{\alpha \in \mathcal{F}^j}f'_{j,\alpha}p_\alpha$, where $\mathcal{F}^j$ and $p_\alpha$ are defined as in the proof of \cref{lemma:variance_energy}. Moreover, the corresponding coefficient vectors satisfy $\norm{f'_j}^2_2\leq 4\sum_{i\in I^j}c_i^2$ and $\max_{\alpha\in \mathcal{F}^j}\left|f'_{j,\alpha}\right|\leq 2c_\mathrm{max}$. Here, $I^j$ denotes the set of Hamiltonian indices $i$ such that the support of $P_i$ overlaps with the support of $X_j$. Therefore, by the same arguments as in \cref{lemma:variance_energy} and $\cref{lemma:singleshot_energy}$, and using $m_j\leq m$ and $M_j\leq M$, we obtain the bounds 
    \begin{equation}
        \mathrm{Var}[\hat{B}_j]\leq 4^{m+1}3^{l-1}(Mc_\mathrm{max})^2,\qquad \left| \hat{B}_j \right|\leq 2Mc_\mathrm{max}4^m3^{l-1},
    \end{equation}
    where $\hat{B}_j$ denotes the single-shot shadow estimator of $B_j$. Therefore, similarly to the proof of \cref{theorem1}, we can now apply Bernstein's inequality and a union bound over all $2K$ operators to conclude that for any $\epsilon_F, \delta \in (0,1]$
    \begin{equation}
        N=\frac{32\log(4K/\delta)}{9\epsilon^2_F}4^m3^lM^2c^2_\mathrm{max}
    \end{equation}
    independent classical shadows are sufficient to ensure
    \begin{equation}
        \label{eq:event}
        \mathrm{Pr}\left(\max_{1\leq j\leq K}\max\left(\left|A_j-\bar{A}_j\right|, \left|B_j-\bar{B}_j\right|\right)\geq \epsilon_F\right)\leq \delta,
    \end{equation}
    where $\bar{A}_j$ and $\bar{B}_j$ denote the empirical means over all $N$ single-shot estimators. 
    We can use $\bar{A}_j$ and $\bar{B}_j$ to define the parameter-dependent energy difference estimator
    \begin{equation}
        \bar{\Delta E}_j(\theta)=\frac{1}{2}(\bar{A}_j-\bar{A}_j\cos(\theta)-\sin(\theta)\bar{B}_j), 
    \end{equation}
    whose global maximizer $\bar{\theta}^*_j$ and corresponding global maximum are given by
    \begin{equation}
        \bar{\theta}^*_j=\arctan2(\bar{B}_j, \bar{A}_j),\qquad \bar{\Delta E}_{j, \mathrm{max}}=\frac{1}{2}\left(\bar{A}_j+\sqrt{\bar{A}^2_j+\bar{B}^2_j} \right).
    \end{equation}
    Therefore, using the inequality $\left|\norm{x}_2-\norm{y}_2\right|\leq \norm{x-y}_2$ for the Euclidean norm, we find
    \begin{equation}
        |\Delta E_{j, \mathrm{max}}-\bar{\Delta E}_{j, \mathrm{max}}|=\frac{1}{2}\left|A_j-\bar{A_j}+\sqrt{A^2_j+B^2_j}-\sqrt{\bar{A}^2_j+\bar{B}^2_j} \right|\leq \frac{\left|A_j-\bar{A}_j\right|+\sqrt{\left(A_j-\bar{A}_j\right)^2+\left(B_j-\bar{B}_j\right)^2}}{2}.
    \end{equation}
    On the event in \cref{eq:event} (which holds with probability at least $1-\delta$), we have
    \begin{equation}
        |\Delta E_{j, \mathrm{max}}-\bar{\Delta E}_{j, \mathrm{max}}|\leq \frac{1+\sqrt{2}}{2}\epsilon_F
    \end{equation}
    for all $1\leq j \leq K$. Therefore, guaranteeing that  $|\Delta E_{j, \mathrm{max}}-\bar{\Delta E}_{j, \mathrm{max}}|\leq\epsilon$ necessitates $\epsilon_F\leq \frac{2}{1+\sqrt{2}}\epsilon$ and therefore
    \begin{equation}
        N=\frac{(3+2\sqrt{2})8\log(4K/\delta)}{9\epsilon^2}4^m3^lM^2c^2_\mathrm{max}
    \end{equation}
    independent classical shadows. 
    Furthermore, we have
    \begin{equation}
        \left|\bar{\Delta E}_j(\theta)-\Delta E_j(\theta)\right|=\frac{1}{2}\left|\bar{A}_j-A_j-\left(\bar{A}_j-A_j\right)\cos(\theta)-\left(\bar{B}_j-B_j \right)\sin(\theta)\right|.
    \end{equation}
    Maximizing over $\theta\in[0,2\pi)$ yields
    \begin{equation}
        \max_{\theta}\left|\bar{\Delta E}_j(\theta)-\Delta E_j(\theta)\right|= \frac{\left|A_j-\bar{A}_j\right|+\sqrt{\left(A_j-\bar{A}_j\right)^2+\left(B_j-\bar{B}_j\right)^2}}{2}.
    \end{equation}
    Therefore, using $\bar{\Delta E}_{j,\mathrm{max}}=\bar{\Delta E}_j(\bar{\theta}^*_j)$, it follows
    \begin{equation}
        \left|\Delta E_j(\bar{\theta}^*_j)-\bar{\Delta E}_{j,\mathrm{max}}\right|\leq \epsilon.
    \end{equation}
    Equivalently, using $\Delta E_{j,\mathrm{max}}={\Delta E}_j({\theta}^*_j)$, it follows
    \begin{align}
        \left|\Delta E_j(\bar{\theta}^*_j)-\Delta E_{j,\mathrm{max}}\right|&=\Delta E_j(\theta^*_j)-\Delta E_j(\bar{\theta}^*_j)\\&=\Delta E_j(\theta^*_j)-\bar{\Delta E}_j({\theta}^*_j)+\bar{\Delta E}_j({\theta}^*_j)-\Delta E_j(\bar{\theta}^*_j)\\&\leq \left|\Delta E_j(\theta^*_j)-\bar{\Delta E}_j({\theta}^*_j)\right|+\left|\bar{\Delta E}_j(\bar{\theta}^*_j)-\Delta E_j(\bar{\theta}^*_j)\right|\\
        &\leq 2\epsilon,
    \end{align}
    where we have added and subtracted $\bar{\Delta E}_j(\theta^*_j)$ and used that $\bar{\Delta E}_j(\bar{\theta}^*_j)\geq\bar{\Delta E}_j({\theta}^*_j)$. Therefore, we can conclude that, with probability $1-\delta$, applying $U_j(\bar{\theta}^*_j)$ to $\ket{\psi}$ will yield an actual energy decrease that is at most $\epsilon$ smaller than the estimated maximal energy difference $\bar{\Delta E}_{j,\mathrm{max}}$ and at most $2\epsilon$ smaller than the true maximal energy difference $\Delta E_{j,\mathrm{max}}$.
\end{proof}
Notably, once $A_j$ and $B_j$ are estimated up to additive error $\frac{2}{1+\sqrt{2}}\epsilon$, the function $\Delta E_j(\theta)$ can be evaluated uniformly in $\theta\in[0,2\pi)$ up to additive error $\epsilon$. Since SEGQE only requires the maximal energy differences and the corresponding optimal parameters, we do not state this explicitly in \cref{theorem2}. 
\begin{proof}[Proof of \Cref{cor:pauli-rotations}]
Pauli operators $P\in\mathcal P^n$ satisfy $P^2=I$ and are Hermitian, and therefore belong to the class of generators considered in \cref{theorem2}. We specialize \cref{theorem2} to the case where $X_j=P_j$ ranges over all $m$-local Pauli operators. The key simplification in this setting is that conjugation by a Pauli operator does not mix Pauli strings.
Indeed, for any Pauli operators $P_j$ and $P_i$ that anticommute and have localities $m_j$ and $m_i$, one has $P_j P_i P_j = - P_i$ and $i[P_j,P_i] = 2P_jP_i$ (up to a phase), where $P_jP_i$ is a single Pauli operator of locality at most $m_i+m_j-1$. As a consequence, for Pauli rotations the coefficients $A_j$ and $B_j$ defined in the proof of \cref{theorem2} decompose into sums of $M_j$ Pauli expectation values instead of $M_j4^{m_j}$ Pauli terms as in the general case. Therefore, when calculating upper bounds on the variance and magnitude of the single-shot estimators $\hat{A}_j$ and $\hat{B}_j$, we can use a factor $3^m$ instead of $4^m$. Notably, the quantity $A_j$ does not depend on the locality of $P_j$. Accordingly, its variance and single-shot bounds are independent of $m$. For consistency with the bounds on $\hat B_j$, we retain the looser uniform bounds stated above. 
Finally, the total number of $n$-qubit, $m$-local Pauli generators is $4^m\binom{n}{m}$. Applying a union bound over this set and proceeding exactly as in the proof of \cref{theorem2} gives the stated sample complexity, which is asymptotically smaller by a factor of $\left(\frac{3}{4}\right)^m$ than in the general case considered in \cref{theorem2}.
\end{proof}

\section{Transverse-Field Ising model}
\label{app:tfi}
Consider the Hamiltonian of the TFI model defined as
\begin{equation}
    H=w\sum^n_{i=1}Z_i+J\sum^{n-1}_{i=1}X_iX_{i+1},
\end{equation}
where $w$ denotes the strength of the transverse field, $J$ the nearest neighbor coupling constant, and $n$ the number of spins (qubits). We show that, for this specific Hamiltonian, the worst-case per-iteration measurement cost of SEGQE (taking all two-local Pauli rotations as gate set) can be upper bounded by constants that are substantially smaller than the general worst-case bound of \cref{cor:pauli-rotations}.

Following the proof of \cref{theorem2}, for a two-local Pauli generator $P$ we define
\begin{equation}
    A_P := \bra{\psi}H\ket{\psi} - \bra{\psi}PHP\ket{\psi},
    \qquad
    B_P := \bra{\psi} i[P,H]\ket{\psi},
\end{equation}
and let $\hat A_P$ and $\hat B_P$ denote the corresponding single-shot classical-shadow estimators. We bound their variance from above by their second moment and bound the second moments of two Pauli estimators $\hat{p}_\alpha$ and $\hat{p}_\beta$ corresponding to the Pauli operators $P_\alpha$ and $P_\beta$ by 
\begin{equation}
    \mathbb{E}\!\left[\hat P_\alpha\,\hat P_\beta\right]
    \;\le\;
    \begin{cases}
        3^{L} & \text{if $P_\alpha$ and $P_\beta$ commute qubit-wise,}\\
        0     & \text{otherwise,}
    \end{cases}
\end{equation}
where $L$ denotes the number of qubits on which both $P_\alpha$ and $P_\beta$ act nontrivially.
Due to the locality of $H$, the expansions of $A_P$ and $B_P$ involve only Hamiltonian terms overlapping the support of $P$. For two-local generators, the resulting covariance structures depend only on the relative separation of the two sites, and it therefore suffices to consider nearest neighbor and next-nearest neighbor choices. As an example, assume $n\geq 5$ and take the next-nearest neighbor generator $P = Y_2Y_4$. A direct computation gives
\begin{equation}
    A_P = 2w(Z_2+Z_4)+2J(X_1X_2+X_2X_3+X_3X_4+X_4X_5).
\end{equation}
Ordering the six Pauli terms in $A_P$ as $(Z_2, Z_4, X_1X_2, X_2X_3,X_3X_4,X_4X_5)$, the above second-moment bound implies
\begin{equation}
    \mathrm{Var}[\hat A_P] \;\le\; f^\top K f,
\end{equation}
with $f^\top = 2(w,w,J,J,J,J)$ and
\begin{equation}
    K=\begin{pmatrix}
        3&1&0&0&1&1\\
        1&3&1&1&0&0\\
        0&1&9&3&1&1\\
        0&1&3&9&3&1\\
        1&0&1&3&9&3\\
        1&0&1&1&3&9
    \end{pmatrix}.
\end{equation}
Consequently,
\begin{equation}
    \mathrm{Var}[\hat A_P]
    \le 32w^2 + 32wJ + 240J^2
    \le 304\,\max(w^2,J^2),
\end{equation}
which is more than an order of magnitude smaller than the corresponding worst-case prefactor implied by \cref{cor:pauli-rotations},
$\mathrm{Var}[\hat A_P]\le 3888\,\max(w^2,J^2)$. An analogous calculation yields
\begin{equation}
    \mathrm{Var}[\hat B_P] \;\le\; 648\,\max(w^2,J^2).
\end{equation}
Repeating this procedure for all two-local Pauli generators on nearest neighbor and next-nearest neighbor pairs, we find that the prefactor $648$ is the largest value attained for $\mathrm{Var}[\hat B_P]$ under the above covariance bound. Therefore, for the TFI model the model-specific worst-case constants entering the per-iteration sample complexity are smaller by a factor of $6$ compared to the general worst-case estimate of \cref{cor:pauli-rotations}. 

Furthermore, in all numerical simulations of SEGQE applied to the TFI model (see \cref{sec:TFI}), we observe that, when starting from the product state $\ket{0}^{\otimes n}$, the algorithm exclusively selects Pauli rotations generated by $XY$ and $YX$ operators (see \cref{fig:example_circuit}). 
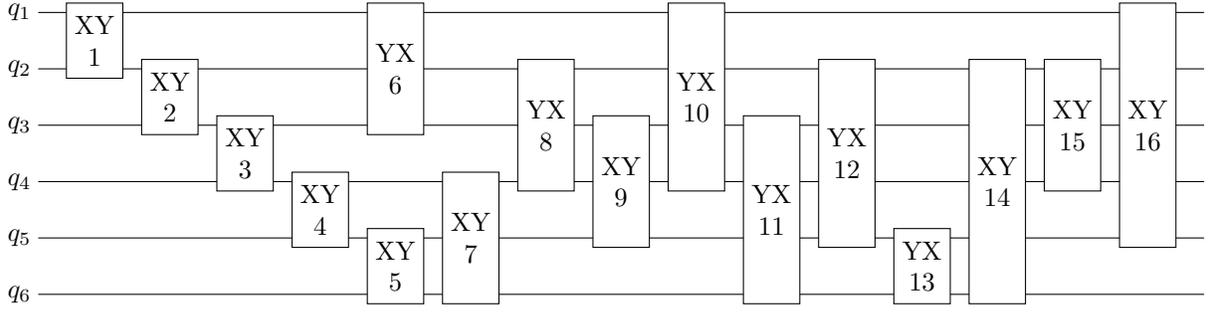
\begin{figure}
    \begin{center}
        \begin{tikzpicture} 
  \begin{pgfonlayer}{nodelayer} 
    \node [style=Label] (0) at (0, -0.0) {$q_{1}$};
    \node [style=none] (6) at (0.5, -0.0) {};
    \node [style=Label] (1) at (0, -1.5) {$q_{2}$};
    \node [style=none] (7) at (0.5, -1.5) {};
    \node [style=Label] (2) at (0, -3.0) {$q_{3}$};
    \node [style=none] (8) at (0.5, -3.0) {};
    \node [style=Label] (3) at (0, -4.5) {$q_{4}$};
    \node [style=none] (9) at (0.5, -4.5) {};
    \node [style=Label] (4) at (0, -6.0) {$q_{5}$};
    \node [style=none] (10) at (0.5, -6.0) {};
    \node [style=Label] (5) at (0, -7.5) {$q_{6}$};
    \node [style=none] (11) at (0.5, -7.5) {};
\node [style=gateTest_1] (14) at (2, -0.75) {XY\\  $1$};\node [style=gateTest_1] (17) at (4, -2.25) {XY\\  $2$};\node [style=gateTest_1] (20) at (6, -3.75) {XY\\  $3$};\node [style=gateTest_1] (23) at (8, -5.25) {XY\\  $4$};\node [style=gateTest_1] (26) at (10, -6.75) {XY\\  $5$};\node [style=gateTest_2] (29) at (10, -1.5) {YX\\  $6$};\node [style=gateTest_2] (32) at (12, -6.0) {XY\\  $7$};\node [style=gateTest_2] (35) at (14, -3.0) {YX\\  $8$};\node [style=gateTest_2] (38) at (16, -4.5) {XY\\  $9$};\node [style=gateTest_3] (41) at (18, -2.25) {YX\\  $10$};\node [style=gateTest_3] (44) at (20, -5.25) {YX\\  $11$};\node [style=gateTest_3] (47) at (22, -3.75) {YX\\  $12$};\node [style=gateTest_1] (50) at (24, -6.75) {YX\\  $13$};\node [style=gateTest_4] (53) at (26, -4.5) {XY\\  $14$};\node [style=gateTest_2] (56) at (28, -3.0) {XY\\  $15$};\node [style=gateTest_4] (59) at (30, -3.0) {XY\\  $16$};    \node [style=none] (60) at (31.5, -0.0) {};
    \node [style=none] (61) at (31.5, -1.5) {};
    \node [style=none] (62) at (31.5, -3.0) {};
    \node [style=none] (63) at (31.5, -4.5) {};
    \node [style=none] (64) at (31.5, -6.0) {};
    \node [style=none] (65) at (31.5, -7.5) {};
\end{pgfonlayer} 
 \begin{pgfonlayer}{edgelayer} 
\draw (6.center) to (60.center); 
\draw (7.center) to (61.center); 
\draw (8.center) to (62.center); 
\draw (9.center) to (63.center); 
\draw (10.center) to (64.center); 
\draw (11.center) to (65.center); 
\end{pgfonlayer} 
 \end{tikzpicture} 
    \end{center}
    \caption{Example circuit generated by SEGQE for the open-boundary transverse-field Ising model at criticality with $n=6$ qubits, starting from the initial state $\ket{0}^{\otimes n}$. Each block represents a two-qubit Pauli rotation, and its label denotes the Pauli operator generating the gate. The integer shown on each block denotes the SEGQE iteration at which the corresponding gate was appended.}
    \label{fig:example_circuit}
\end{figure}
Restricting the gate set to the Pauli rotations corresponding to these generators, allows for a further tightening of the variance bounds. In particular, we obtain
\[
\mathrm{Var}[\hat A_P]\le 136\,\max(w^2,J^2), \qquad
\mathrm{Var}[\hat B_P]\le 216\,\max(w^2,J^2)
\]
for $P=X_iY_{i+1}$, and
\[
\mathrm{Var}[\hat A_P]\le 144\,\max(w^2,J^2), \qquad
\mathrm{Var}[\hat B_P]\le 360\,\max(w^2,J^2)
\]
for $P=X_iY_{i+j}$ with $j>1$.

As a result, the upper bounds on the measurement cost relevant for the numerical experiments presented in \cref{sec:TFI} are more than an order of magnitude smaller than the conservative worst-case bound provided by \cref{cor:pauli-rotations}.

\end{document}